\documentclass[a4paper,10pt,onecolumn,unpublished,nopdfoutputerror]{quantumarticle}
\usepackage{silence}
\usepackage[T1]{fontenc}
\usepackage[utf8]{inputenc}
\usepackage{amsmath,amssymb,amsthm,booktabs,array,longtable,graphicx}
\usepackage{tikz,microtype,pdflscape}
\usepackage[numbers,sort&compress]{natbib}
\usepackage{hyperref}
\newcommand{\ket}[1]{\lvert #1\rangle}
\newcommand{\bra}[1]{\langle #1\rvert}
\newcommand{\AME}{\operatorname{AME}}

\newcommand{\C}{\mathbb C}
\newcommand{\R}{\mathbb R}
\newcommand{\F}{\mathbb F}
\newcommand{\Q}{\mathbb Q}

\newtheorem{proposition}{Proposition}

\newtheorem{corollary}{Corollary}
\definecolor{col0}{HTML}{307AB4}
\definecolor{col1}{HTML}{D66A35}
\definecolor{col2}{HTML}{288F83}
\definecolor{col3}{HTML}{BE5475}
\definecolor{col4}{HTML}{8B72B3}
\definecolor{col5}{HTML}{A6AA37}
\begin{document}
\hypersetup{linkcolor=col0,citecolor=col0,urlcolor=col0}
\title{Fractional and quantum Latin squares:\newline
real solutions of quantum 36 officers problem}
\author{Mykhailo Hontarenko}
\affiliation{Institute of Theoretical Physics, Faculty of Physics, Astronomy and Applied Computer Science, Jagiellonian University, ul.\ \L ojasiewicza 11, 30-348 Krak\'ow, Poland}
\affiliation{Doctoral School of Exact and Natural Sciences, Jagiellonian University, ul.\ \L ojasiewicza 11, 30-348 Krak\'ow, Poland}
\author{Karol \.Zyczkowski}
\affiliation{Institute of Theoretical Physics, Faculty of Physics, Astronomy and Applied Computer Science, Jagiellonian University, ul.\ \L ojasiewicza 11, 30-348 Krak\'ow, Poland}
\affiliation{Center for Theoretical Physics, Polish Academy of Sciences, 
Aleja Lotnik\'ow 32/46, 02-668 Warsaw, Poland}
\date{October 12, 2026}
\maketitle
\hypersetup{pdftitle={Fractional and quantum Latin squares: real solutions of quantum 36 officers problem}}
\begin{abstract}
A Latin square can be extended by replacing its symbols with probabilities or quantum states. We relate these two extensions through explicit examples, from classical and fractional Latin squares to quantum Latin squares
and
a real solution of Euler's quantum 36 officers problem. A simultaneous cyclic-shift ansatz on a four-torus of tensor coefficients 
gives a real absolutely maximally entangled (AME) state of four six-level systems. A local orthogonal change of basis yields a representative with 96 nonzero amplitudes.
Two  single-party  arrays 
obtained from the corresponding quantum orthogonal Latin square yield
fractional Latin squares. Signs matter: comparison with 
the independent real construction of Bruzda and its sign variant distinguishes three local-unitary classes, two of which have identical probabilities.
We extend this probability representation to
existing  seven- and nine-party AME states with six levels each.
 Their amplitude matrices reveal 48 complex Hadamard blocks of order $3^3$ and six of order $6^4$, respectively. These constructions separate what can be read from a probability picture from the coherent conditions responsible for maximal entanglement.
\end{abstract}

\section{Introduction}

Euler's 36 officers problem asks for an arrangement of six regiments and six ranks in a square, with every regiment and every rank represented once in each row and column. 
Euler conjectured  \cite{Euler1782} and Tarry proved \cite{Tarry1900}
that such a configuration is not possible classically. Pairs of orthogonal Latin squares exist at every order $d\geq2$ except two and six \cite{Bose1960}. The 
situation changes if the problem becomes 'quantum' and each 
cell is allowed to contain a superposition. 
Such an approach allowed  Rather \emph{et al.}
to obtain a quantum orthogonal Latin square of order six -- a solution to the quantum version 
of the 36  officers problem, and at the same time, to construct an absolutely maximally entangled golden state of a four-dice system \cite{Rather2022,Zyczkowski2023}.

Subsequent complex constructions use biunimodular vectors~\cite{Rather2024Biunimodular}, two-unitary complex Hadamard matrices~\cite{BruzdaZyczkowski2024}, and analytic constructions from qutrit Clifford sectors~\cite{GrossGoedicke2025}. In particular, flat complex $\AME(4,6)$ representatives are already known. These results enlarge the class of quantum solutions, but do not 
 lead to a real solution. %

The above observation leaves room for two rather different extensions of a Latin square. In a fractional square, 
each cell can be divided into smaller figures
whose relative areas specify a
probability distribution over symbols.
In a quantum Latin square 
each cell corresponds to a normalized 
pure state, while each row and column
forms an orthonormal basis.
Squared amplitudes connect the two descriptions, but do not identify them. Probabilities determine
the relative area of each coloured region
in a given cell, while relative phases
of the superposition coefficients
determine whether the vectors represented by that picture are orthogonal.

The connection between maximally mixed reductions and quantum
error-correcting codes was developed by Scott~\cite{Scott2004}, who
related multipartite entanglement measures to the purity of reduced
states. Facchi \emph{et al.} formulated the search for maximally
multipartite entangled states through the average purity over
balanced bipartitions~\cite{Facchi2008} and studied the frustration
that arises when its minimum cannot be reached in every
cut~\cite{Facchi2010}. AME states attain this minimum simultaneously
and are therefore perfect maximally multipartite entangled states
in that terminology. 
Let us emphasize that $\AME(N,d)$ states do not exist for all pairs $(N,d)$: in particular, $\AME(4,2)$ and $\AME(7,2)$
do not exist~\cite{HiguchiSudbery2000,Huber2017}. 
Due to effective usage of techniques relying on artificial intelligence,
a significant progress in this field started in July 2026,
with several results
beginning to appear in August, as models Fable 5 and GPT-5.6 Sol
were released.

Our main construction is a real $\AME(4,6)$
state
(of $n=4$ systems with $d=6$ levels each),
invariant under a simultaneous cyclic shift of all four indices of the tensor. A suitable
local orthogonal transformation reduces its dense cyclic form to a sparse representative with 96 nonzero amplitudes. This gives a positive answer to problem~(T2) of Ref.~\cite{AMEReview2026}, which asks whether real coefficients suffice, or equivalently whether an orthogonal two-unitary matrix of order 36 exists. We give both bases and their connecting transformations, so the role of the symmetry can be followed independently of the numerical search.

The sparse basis is particularly convenient for a Latin-square description. Each quantum officer contains either two or four equally weighted product-basis terms. Their areas are halves or quarters, their two output labels provide two alphabets, and their signs retain the cancellations required for maximal entanglement. We compare this state with Bruzda's independent 60-term construction \cite{WB26} and with a tensor obtained by reversing one of its signs. An exact contraction distinguishes the three states even when permutations of the parties are allowed. The last two have the same probabilities, making the information lost by an area picture especially transparent.

The central result is the exact real $\AME(4,6)$ state  and its local transformations. The Latin-square viewpoint then gives a second result: fixed-position colour assembly can be possible in one local basis and impossible in another. In such a  way we establish a link between 
classical fractional Latin squares, quantum Latin squares
and quantum orthogonal Latin squares.

For more than four parties, cells become states of several output systems rather than individual qudits,
and one uses Latin cubes instead of squares.
We illustrate this passage using the  non-stabilizer states
$\AME(7,6)$  of Shi \emph{et al.}\ \cite{Shi2026} and $\AME(9,6)$  of Kitamura \cite{Kitamura2026}. The corresponding probability arrays are shown
to form a fractional Latin cube and a four-dimensional fractional Latin hypercube. Their pictures are very different: the seven-party example uses two alternating palettes, whereas the nine-party example has uniform probabilities and a nontrivial phase pattern. Grouping the amplitudes reveals 48 Hadamard blocks of order $27$ in the seven-party case, and six of order $6^4$ in the nine-party case.

This work is organized as follows. The next two sections recall fractional and quantum Latin squares. Section~\ref{sec:officers} explains their relation to quantum-officer arrays, including an order-four example giving $\AME(4,4)$. Section~\ref{sec:mh} presents the cyclic real construction and its sparse representative. Section~\ref{sec:invariant} compares three real four-party states, while Section~\ref{sec:higher} treats the seven- and nine-party examples. The appendices contain the local rotations, complete sparse coefficients, a 192-term basis with fixed-position colour sums, the sign witness and exact Hadamard-block proofs. Geomagic representations provide a related geometric viewpoint \cite{Sallows2011,WB26}; here our emphasis is on the Latin constraints and the amplitudes that lift them to quantum states.

\section{Latin squares and fractional entries}
\label{sec:fractional}
A Latin square of order $d$ contains $d$ symbols arranged in a $d\times d$ array, with each symbol appearing once in every row and column. Starting from the familiar cyclic example, $L_{ij}=i+j\pmod d$, one can express the same rule in a form suited to generalization: replace the symbols by a three-index incidence tensor,
\begin{equation}
 P^L_{ijk}=\delta_{k,L_{ij}},\qquad 0\leq i,j,k<d.
 \label{eq:latin-incidence}
\end{equation}
For fixed $i,j$, the third index records the symbol in a cell; the three indices thus encode the row, column and symbol on an equal footing. The Latin rule says that summing over any one of them, with the other two fixed, gives one.

Allowing a symbol to occupy only a fraction of a cell leads to a continuous version of this construction. In this work a \emph{fractional Latin square} is a real non-negative array $P$ satisfying
\begin{equation}
 \sum_k P_{ijk}=1,\qquad
 \sum_j P_{ijk}=1,\qquad
 \sum_i P_{ijk}=1.
 \label{eq:tristochastic}
\end{equation}
Such arrays are also called \emph{tristochastic} or three-dimensional line-stochastic tensors. We use the equal-weight case of fractional Latin squares and extend it to real entries, whereas the original definition in Ref.~\cite{Dugdale2000} allows rational symbol weights. Restricting Eq.~\eqref{eq:tristochastic} to $0/1$ entries recovers precisely the incidence tensors of ordinary Latin squares.

For a pictorial interpretation, assign a colour to each symbol and divide cell $(i,j)$ into regions of areas $P_{ij0},\ldots,P_{ij,d-1}$, taking the area of a cell to be one. The first condition then fills each cell, while the other two ensure that any given colour occupies total area one along a row or column. These equal areas follow from the Latin conditions. Whether the corresponding regions can also be assembled into a prescribed shape will depend on their geometry.

It is tempting to regard every fractional square as a random choice among classical squares. Indeed, any convex combination
\begin{equation}
 P=\sum_\alpha w_\alpha P^{L_\alpha},\qquad
 w_\alpha\geq0,\quad \sum_\alpha w_\alpha=1,
 \label{eq:latin-mixture}
\end{equation}
satisfies Eq.~\eqref{eq:tristochastic}. The converse, however, fails because the tristochastic polytope has vertices that are not Latin incidence tensors~\cite{Linial2014}; the Birkhoff theorem for doubly stochastic matrices therefore does not extend in this way. Throughout the paper, the ``tristochastic polytope'' denotes the full set defined by Eq.~\eqref{eq:tristochastic}, rather than only the convex hull of classical Latin squares.

\begin{proposition}[Fixed-position colour layouts]\label{prop:fixed-position}
Let $P$ be a fractional Latin square of order $d$. In each cell $(i,j)$, allow a measurable partition
$\{E_{ijk}\}_{k=0}^{d-1}$ of the same unit square $Q$, with
$|E_{ijk}|=P_{ijk}$. Regions may be disconnected, and all partition
conditions are understood up to sets of area zero. Require that,
for each colour $k$, the regions also partition $Q$ along every row
and every column, without moving them from their in-cell coordinates.
Such a layout exists if and only if $P$ is a convex combination of
Latin-square incidence tensors.
\end{proposition}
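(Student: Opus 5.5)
The idea is to read a fixed-position layout pointwise. Each point $x\in Q$ determines a symbol assignment $L_x(i,j)=k$ whenever $x\in E_{ijk}$. The row and column partition conditions then say exactly that $L_x$ is a Latin square for almost every $x$. Integrating over $Q$ recovers $P$ as an average of Latin incidence tensors.

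For the direction from layout to convex combination, we fix a layout $\{E_{ijk}\}$. For each cell, the sets $E_{ij0},\dots,E_{ij,d-1}$ partition $Q$ up to null sets. For each colour $k$ and row $i$, the sets $E_{i0k},\dots,E_{i,d-1,k}$ also partition $Q$, and similarly along columns. There are finitely many such conditions, so we discard the union $N$ of all exceptional null sets. For $x\in Q\setminus N$ we define $L_x(i,j)$ as the unique $k$ with $x\in E_{ijk}$. The row condition gives, for each $k$ and $i$, exactly one $j$ with $L_x(i,j)=k$; the column condition gives the same with the roles of $i$ and $j$ exchanged. Hence $L_x$ is a Latin square. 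The map $x\mapsto L_x$ is measurable and takes finitely many values, so the preimages $A_L=\{x:L_x=L\}$ are measurable and partition $Q\setminus N$. We set $w_L=|A_L|$. Then
\[
P_{ijk}=|E_{ijk}|=\sum_L |A_L|\,\delta_{k,L(i,j)}=\sum_L w_L P^L_{ijk},
\]
with $\sum_L w_L=|Q|=1$. This is Eq.~\eqref{eq:latin-mixture}.

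For the converse, suppose $P=\sum_\alpha w_\alpha P^{L_\alpha}$ with finitely many terms; Carathéodory's theorem allows us to assume this. We cut $Q$ into disjoint measurable pieces $A_\alpha$ of area $w_\alpha$, for instance vertical strips. We then put $E_{ijk}=\bigcup\{A_\alpha : L_\alpha(i,j)=k\}$. Each cell is partitioned because every $L_\alpha$ assigns exactly one symbol to $(i,j)$. The area is $|E_{ijk}|=\sum_\alpha w_\alpha\delta_{k,L_\alpha(i,j)}=P_{ijk}$. For fixed $k$ and $i$, the sets $E_{ijk}$ over $j$ are disjoint with union $Q$, since each $L_\alpha$ places symbol $k$ in row $i$ exactly once; columns work the same way.

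The only delicate step is handling the null sets in the first direction, and it is the step I would write out most carefully. The fix is to take the finite union of the exceptional sets before defining $L_x$, so that each of the finitely many conditions holds at every remaining point. Once this is done the equivalence is immediate. Combined with the existence of non-Latin vertices of the tristochastic polytope~\cite{Linial2014}, it shows that some fractional Latin squares admit no fixed-position layout.
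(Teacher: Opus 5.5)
Your proof is correct and follows the paper's argument. You read off a Latin square $L_x$ at almost every point and group points by $L_x$ for one direction, and you stack vertical strips of widths $w_L$ for the converse; your explicit removal of the finite union of null sets and the measurability of $A_L$ spell out what the paper states in one line (Carath\'eodory is unnecessary, since there are only finitely many Latin squares of order $d$).
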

\begin{proof}
Away from a common set of area zero, each point $x\in Q$ belongs to
exactly one colour in each cell, and each colour occurs exactly once
in every row and column. These colours therefore form a Latin square
$L_x$. There are finitely many Latin squares of order $d$. Grouping the
points with the same $L_x$ and taking their areas gives nonnegative
weights $w_L$ with sum one and $P=\sum_Lw_LP^L$.
Conversely, partition $Q$ into vertical strips of widths $w_L$ and
colour strip $L$ in cell $(i,j)$ with the symbol $L_{ij}$. Every strip
satisfies the Latin rule, so all required partitions follow.
\end{proof}
The proposition concerns one selected colour alphabet. A simultaneous
rule for a second alphabet imposes further conditions. Nor does the
proposition require the region of an individual quantum coefficient
to be connected.

The same tensor defines a binary operation on probability distributions. Given distributions $p$ and $q$, their generalized $P$-convolution,
$r=p \star q$,
has components
\begin{equation}
 r_k=\sum_{i,j}P_{ijk}p_iq_j.
 \label{eq:classical-product}
\end{equation}
Positivity and the first line-sum condition make the result a probability distribution, while the other two imply that it is uniform whenever either argument is uniform. These operations, and their extensions to quantum states, were studied in Ref.~\cite{Bistron2023}. Here the classical tensor will organize the measurement probabilities of a quantum Latin square. To recover the square itself, however, we must also account for coherent orthogonality relations between the amplitudes.

\section{Quantum Latin squares and their probability arrays}
\label{sec:qls}
A quantum Latin square (QLS) of order $d$ is a $d\times d$ array of unit vectors $\ket{a_{ij}}\in\C^d$ whose rows and columns are orthonormal bases~\cite{Musto2016}:
\begin{equation}
 \langle a_{ij}|a_{ij'}\rangle=\delta_{jj'},\qquad
 \langle a_{ij}|a_{i'j}\rangle=\delta_{ii'}.
 \label{eq:qls}
\end{equation}
An ordinary Latin square becomes a QLS when each symbol is replaced by the corresponding basis vector, but Eq.~\eqref{eq:qls} also allows cells to contain superpositions. Once an output basis $\{\ket{k}\}$ is chosen, the resulting quantum square determines the probabilities
\begin{equation}
 P_{ijk}=|\langle k|a_{ij}\rangle|^2
 \label{eq:qls-probabilities}
\end{equation}
of its measurement outcomes,
which form a tristochastic tensor.

\begin{proposition}
The probability array of a quantum Latin square is tristochastic. If the square is a quantum Sudoku, its probabilities also satisfy the corresponding block constraints.
\end{proposition}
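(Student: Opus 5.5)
The plan is to verify the three line sums of Eq.~\eqref{eq:tristochastic} directly from Eq.~\eqref{eq:qls}, with Parseval's identity (completeness of an orthonormal basis) as the only tool. Nonnegativity of $P_{ijk}=|\langle k|a_{ij}\rangle|^2$ is immediate.

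First, for fixed $(i,j)$ I would sum over the output label $k$. Since $\{\ket{k}\}$ is an orthonormal basis of $\C^d$ and $\ket{a_{ij}}$ is a unit vector,
\[
\sum_k P_{ijk}=\sum_k\langle a_{ij}|k\rangle\langle k|a_{ij}\rangle=\langle a_{ij}|a_{ij}\rangle=1 .
\]

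Second, for fixed $i$ and $k$ I would sum over the column index $j$. By Eq.~\eqref{eq:qls}, the vectors $\{\ket{a_{ij}}\}_{j=0}^{d-1}$ are $d$ orthonormal vectors in $\C^d$, so they form a basis and satisfy $\sum_j\ket{a_{ij}}\bra{a_{ij}}=\mathbb 1$. Therefore
\[
\sum_j P_{ijk}=\bra{k}\Bigl(\sum_j\ket{a_{ij}}\bra{a_{ij}}\Bigr)\ket{k}=1 .
\]
The column sum over $i$ follows in the same way from orthonormality of the columns. This is the only step that needs care: orthonormality alone gives only a Bessel inequality, and completeness comes from the fact that each row or column contains exactly $d$ orthonormal vectors in dimension $d$.

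For the Sudoku part, I would take a quantum Sudoku of order $d=n^2$ to be a QLS in which, in addition, each of the $d$ blocks of size $n\times n$ consists of $d$ orthonormal vectors. The same completeness argument applied to a block $B$ gives
\[
\sum_{(i,j)\in B}P_{ijk}=1
\]
for every $k$, which is exactly the block constraint for the probability array. I expect no real obstacle here. The only point to state explicitly is that the notion of a quantum Sudoku has not been defined earlier in the paper, so I would fix this definition in the proof.
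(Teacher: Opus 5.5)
Your proof is correct and follows the paper's argument exactly: normalization gives the sum over $k$, and the completeness relation $\sum_j\ket{a_{ij}}\bra{a_{ij}}=I_d$, together with its column and block analogues, evaluated in $\ket{k}$ gives the remaining sums. Your explicit remarks on nonnegativity, on completeness following from having $d$ orthonormal vectors in $\C^d$, and on fixing the definition of a quantum Sudoku make explicit points the paper leaves implicit.
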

\begin{proof}
Normalization of $\ket{a_{ij}}$ gives $\sum_kP_{ijk}=1$. Completeness of each row gives $\sum_j\ket{a_{ij}}\bra{a_{ij}}=I_d$; taking the diagonal matrix element in $\ket{k}$ gives $\sum_jP_{ijk}=1$. Columns give the third equation. For a block $B$ of a quantum Sudoku \cite{Nechita2021,Paczos2021},
completeness similarly gives $\sum_{(i,j)\in B}P_{ijk}=1$.
\end{proof}

The diagonal matrix elements used in the proof retain the probabilities, but discard the phases. To see what is lost, write
$\langle k|a_{ij}\rangle=\sqrt{P_{ijk}}e^{i\phi_{ijk}}$. Row orthogonality then requires, for $j\ne j'$,
\begin{equation}
 \sum_k\sqrt{P_{ijk}P_{ij'k}}\,
 e^{i(\phi_{ij'k}-\phi_{ijk})}=0,
 \label{eq:phase-lift}
\end{equation}
with analogous equations in the columns. These cancellations impose conditions beyond the line sums of $P$. Consequently the probability array records part of the quantum square, without determining its coherent structure.

\subsection{A quantum Sudoku with sixteen distinct states}
Let $e_0,\ldots,e_3$ be an orthonormal basis. The following array illustrates the distinction between amplitudes and probabilities:
\begin{equation}
\renewcommand{\arraystretch}{1.22}
\begin{array}{c|c|c|c}
 e_0&e_1&e_2+e_3&e_2-e_3\\
 e_2&e_3&e_0-e_1&e_0+e_1\\\hline
 e_1+e_3&e_0-e_2&e_0+e_1+e_2-e_3&e_0-e_1+e_2+e_3\\
 e_1-e_3&e_0+e_2&e_0+e_1-e_2+e_3&e_0-e_1-e_2-e_3
\end{array}.
\label{eq:qls-four}
\end{equation}
With each entry normalized separately by a factor $1$, $1/\sqrt2$ or $1/2$, every row, column and indicated $2\times2$ block becomes an orthonormal basis. The array is therefore a quantum Sudoku, or SudoQ~\cite{Nechita2021,Paczos2021}, with sixteen distinct rays in place of the four used by a classical square. Since a common output unitary preserves the number of distinct rays, no such change of basis reduces this array to a classical Sudoku.
The pattern is genuinely quantum in the sense of Ref.~\cite{Paczos2021} and has the maximal cardinality sixteen. It also defines twelve distinct orthogonal measurements: four from the rows, four from the columns and four from the blocks. Its probability picture, however, contains only nine distinct distributions, since taking squared moduli removes the relative signs.

\begin{figure}[htbp]
\centering
\includegraphics[width=\linewidth]{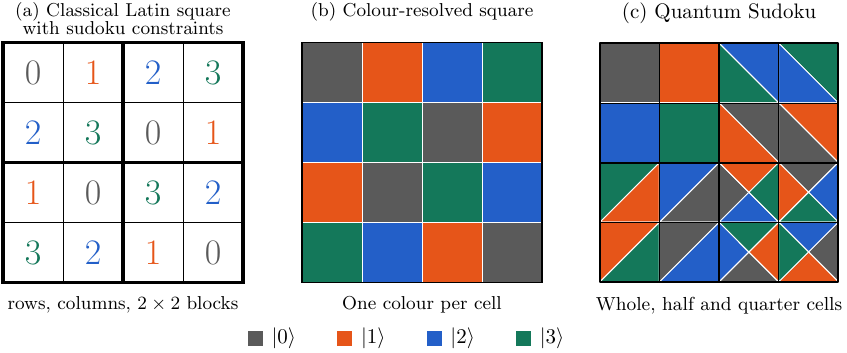}
\caption{From symbols to probability areas. (a) A classical Sudoku and (b) its colour encoding, with the $2\times2$ blocks marked by heavier lines. (c)  Probability array of the quantum Sudoku \eqref{eq:qls-four}
leads to a fractional Latin square of order four. Each region has area equal to the squared modulus of its coefficient. Taking square roots of these areas and restoring the signs in Eq.~\eqref{eq:qls-four} recovers the quantum states. Their sixteen distinct rays in Hilbert space give nine distinct probability distributions.}
\label{fig:karol-qls}
\end{figure}

The probability array nevertheless admits a simple classical realization. Consider the four ordinary Sudoku squares
\begin{equation}
\begin{aligned}
 S_1&=\begin{pmatrix}1&2&3&4\\3&4&1&2\\2&3&4&1\\4&1&2&3\end{pmatrix},&
 S_2&=\begin{pmatrix}1&2&3&4\\3&4&1&2\\4&1&2&3\\2&3&4&1\end{pmatrix},\\[4pt]
 S_3&=\begin{pmatrix}1&2&4&3\\3&4&2&1\\2&3&1&4\\4&1&3&2\end{pmatrix},&
 S_4&=\begin{pmatrix}1&2&4&3\\3&4&2&1\\4&1&3&2\\2&3&1&4\end{pmatrix}.
\end{aligned}
\label{eq:four-sudokus}
\end{equation}
Let $P^{S_\alpha}_{ijk}=\delta_{k,(S_\alpha)_{ij}-1}$, so that symbols $1,\ldots,4$ correspond to basis labels $0,\ldots,3$. Entry by entry, the square in Eq.~\eqref{eq:qls-four} satisfies
\begin{equation}
 P_{ijk}=\frac14\sum_{\alpha=1}^4P^{S_\alpha}_{ijk}.
 \label{eq:sudoq-decomposition}
\end{equation}
For example, the first cell always contains symbol~1; the third cell of the first row contains symbols~3 and~4 equally often; the third cell of the third row contains all four symbols once. These give precisely the probabilities $1$, $1/2$ and $1/4$ in the quantum array.

The classical realization in Eq.~\eqref{eq:sudoq-decomposition} reproduces the outcomes of a fixed basis measurement, although the cell vectors themselves remain coherent superpositions. Their relative signs enforce Eq.~\eqref{eq:phase-lift}, and replacing the vectors by incoherent mixtures would destroy this information. A related distinction will arise below for real AME states: identical squared amplitudes can conceal different entanglement classes, which a local-unitary invariant separates.

\section{From orthogonal squares to absolutely maximally entangled states}
\label{sec:officers}
\subsection{Two alphabets and quantum officers}
Two Latin squares $L$ and $M$ are orthogonal if their $d^2$ ordered pairs $(L_{ij},M_{ij})$ are all different. For instance, in $\F_3$,
\begin{equation}
 L_{ij}=i+j,\qquad M_{ij}=i+2j,\qquad
 \begin{array}{ccc}00&12&21\\11&20&02\\22&01&10\end{array}.
 \label{eq:ols-three}
\end{equation}
Here the map $(i,j)\mapsto(i+j,i+2j)$ has determinant one, so every ordered pair appears; fixing either input index still leaves each output symbol uniformly represented. At order six, however, classical symbols cannot satisfy these requirements simultaneously.

For two QLS arrays $a$ and $b$, orthogonality of their product cells means
\begin{equation}
 \langle a_{ij}|a_{i'j'}\rangle\langle b_{ij}|b_{i'j'}\rangle
 =\delta_{ii'}\delta_{jj'}.
 \label{eq:product-qols}
\end{equation}
In the quantum officers problem one may go further and allow the two symbols within a cell to be entangled \cite{Goyeneche2018,Rather2022}.
At order six, entanglement between the two output systems is necessary: Ball and Simoens ruled out an orthogonal pair of QLS, and hence a solution consisting entirely of product-state officers~\cite{BallSimoens2026}.
We therefore write a general cell as
\begin{equation}
 \ket{\Phi_{ij}}=\sum_{k,l=0}^{d-1}T_{ijkl}\ket{k,l}.
 \label{eq:cells}
\end{equation}
We reserve the name "orthogonal pair of QLS" 
for the product-cell construction and use ``quantum-officer array'' for this broader setting, also called a quantum orthogonal Latin square in the literature on Euler's problem. The distinction becomes transparent at the tensor level, where the requirement is that the three matrices
\begin{align}
 U_{ij,kl}&=T_{ijkl},&
 U^{R}_{ik,jl}&=T_{ijkl},&
 U^{\Gamma}_{il,kj}&=T_{ijkl}
 \label{eq:flattenings}
\end{align}
are all unitary. The last column order agrees with partial transposition on the second factor; exchanging its two column labels gives the equivalent $AD|BC$ flattening. Unitarity of the first matrix makes the $d^2$ officers mutually orthogonal, whereas the crossed conditions express the coherent row and column constraints.

For an orthogonal pair of QLS, the three equations follow from the two squares. In particular, substituting $T_{ijkl}=a_{ij,k}b_{ij,l}$ into a crossed Gram matrix gives
\begin{equation}
 \sum_{j,l}a_{ij,k}b_{ij,l}\overline{a_{i'j,k'}b_{i'j,l}},
\end{equation}
where the sum over $l$ vanishes for $i\ne i'$ by column orthogonality of $b$, and row completeness of $a$ gives $\delta_{kk'}$ when $i=i'$. Interchanging $a$ and $b$ proves the other crossed identity. The same matrix conditions remain meaningful for entangled cells even though this factorized argument is no longer available.

\subsection{Normalization and the three balanced cuts}
An $\AME(N,d)$ state is a pure state of $N$ parties with local
dimension $d$ whose reductions to any set of at most
$\lfloor N/2\rfloor$ parties are maximally mixed. Constructions
from classical maximum-distance-separable codes and applications
to quantum secret sharing are discussed by Helwig and
Cui~\cite{HelwigCui2013}.

\begin{proposition}\label{prop:ame}
Three different flattenings of the $4$-index tensor 
$T$ form unitary matrices of size $d^2$ if and only if
\begin{equation}
 \ket{\psi_T}=\frac1d\sum_{i,j,k,l=0}^{d-1}T_{ijkl}\ket{i,j,k,l}
 \label{eq:state}
\end{equation}
is an $\AME(4,d)$ state.
\end{proposition}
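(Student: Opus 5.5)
The plan is to reduce the AME condition for four parties to statements about the three balanced bipartitions $AB|CD$, $AC|BD$ and $AD|BC$, and then to identify each of these with unitarity of one flattening in Eq.~\eqref{eq:flattenings}.

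\emph{Reduction to two-party marginals.} For $N=4$, $\lfloor N/2\rfloor=2$, so the definition asks that every reduction to at most two parties be maximally mixed. I will first note that it suffices to check the two-party reductions. A one-party reduction is a partial trace of a two-party one, and the partial trace of $I_{d^2}/d^2$ is $I_d/d$. The complement of a pair is again a pair. For a pure state, the reductions to complementary sets have the same nonzero spectrum. Hence it is enough to consider the three pairs $AB$, $AC$, $AD$, one from each balanced cut.

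\emph{Linking one cut to one flattening.} For a cut $S|\bar S$, I write $\ket{\psi_T}=\frac1d\sum_{\alpha,\beta}M_{\alpha\beta}\ket{\alpha}_S\ket{\beta}_{\bar S}$, where $M$ is the $d^2\times d^2$ reshaping of $T$ with rows indexed by $S$. Then
\[
\rho_S=\tfrac1{d^2}MM^\dagger .
\]
So $\rho_S=I/d^2$ holds if and only if $MM^\dagger=I$, which for a square matrix is equivalent to $M$ being unitary. For the cut $AB|CD$, $M=U$. For $AC|BD$, $M=U^R$. For $AD|BC$, $M$ is $U^\Gamma$ with its two column labels exchanged, as remarked after Eq.~\eqref{eq:flattenings}. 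Permuting columns preserves unitarity, so this gives the third equivalence.

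\emph{Normalization.} Normalization of $\ket{\psi_T}$ comes for free in both directions. If $U$ is unitary, then $\|\psi_T\|^2=\frac1{d^2}\operatorname{tr}UU^\dagger=1$. In the converse direction, the AME state is normalized by definition, and the argument above already gives unitarity of all three flattenings.

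The only subtle step is the bookkeeping of index order in $U^\Gamma$ versus the $AD|BC$ cut, and checking that the prefactor $1/d$ is exactly the one that turns $\rho_S=I/d^2$ into $MM^\dagger=I$. Both are routine.
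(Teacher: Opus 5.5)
Your proposal is correct and follows the paper's proof. Both write $\rho_S=\frac1{d^2}MM^\dagger$ for the cuts $AB$, $AC$, $AD$ and use equal spectra of complementary pairs for the other two-party marginals; you also spell out two points the paper leaves implicit, namely that one-party marginals follow by partial trace and that the $AD|BC$ cut matches $U^\Gamma$ only up to a column permutation, which leaves $MM^\dagger$ unchanged.
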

\begin{proof}
The coefficient matrix across $AB|CD$ is $U/d$. Its reduced density matrix is
\begin{equation}
 \rho_{AB}=\frac1{d^2}UU^\dagger.
 \label{eq:marginal}
\end{equation}
The other flattenings give $\rho_{AC}$ and $\rho_{AD}$ in the same way. Therefore, unitarity gives the maximally mixed marginals and the norm of the state. The complementary two-party marginals have the same spectra and are also maximally mixed. The converse follows from the same equations.
\end{proof}
These simultaneous conditions are restrictive: Higuchi and
Sudbery proved that no pure four-qubit state has all two-qubit
reductions maximally mixed, ruling out $\AME(4,2)$~\cite{HiguchiSudbery2000}.
Coding theory provides further restrictions. Rains introduced
quantum shadow enumerators to strengthen bounds on quantum
codes~\cite{Rains1999}; generalized shadow inequalities were later
used to obtain nonexistence bounds for AME states also in local
dimensions greater than two~\cite{Huber2018}.

The matrix $U$ is called two-unitary, and its four-leg tensor is perfect \cite{Pastawski2015,Goyeneche2015}. With real coefficients, three independent unitarity conditions reduce to orthogonality; for the classical pair in Eq.~\eqref{eq:ols-three}, the matrices are even permutations, giving the state
\begin{equation}
|\psi_{4,3}\rangle \ = \ 
 \frac13\sum_{i,j=0}^{2}\ket{i,j,i+j,i+2j},
\end{equation}
where addition is understood modulo $3$.
Quantum secret sharing provides one operational interpretation of these maximally mixed reductions 
\cite{Helwig2012}.

At the level of probabilities, $Q_{ijkl}=|T_{ijkl}|^2$ obeys six families of constraints: summing over any two indices gives one. As in the QLS case, these are the diagonal parts of the Gram identities, leaving their off-diagonal cancellations to the amplitudes. There is also a change in the marginal conditions: the three-index array of a single QLS has fixed line sums, whereas this four-index array has fixed sums over pairs of indices.

\subsection{A quantum square of sixteen officers}
An explicit example is the order-four array in Fig.~\ref{fig:karol-states}, where letters $\mathrm A,\mathrm B,\mathrm C,\mathrm D$ label one four-level system and numerals label the other. Its cells range from four product states through eight states of Schmidt rank two to four maximally entangled states of Schmidt rank four. The corresponding four-party state
\begin{equation}
 \ket{\psi_4}=\frac14\sum_{i,j=0}^{3}\ket{i,j}\otimes\ket{\Phi_{ij}}
 \label{eq:karol-four-state}
\end{equation}
has 36 nonzero amplitudes. The three flattenings of $4\psi_4$ have identity Gram matrices, or equivalently each two-party reduction of $\psi_4$ is $I_{16}/16$. The adjacent area picture displays both alphabets and their probabilities; to recover the orthogonality relations, one also needs the signs printed in the state square.

\begin{figure}[htbp]
\centering
\includegraphics[width=\linewidth]{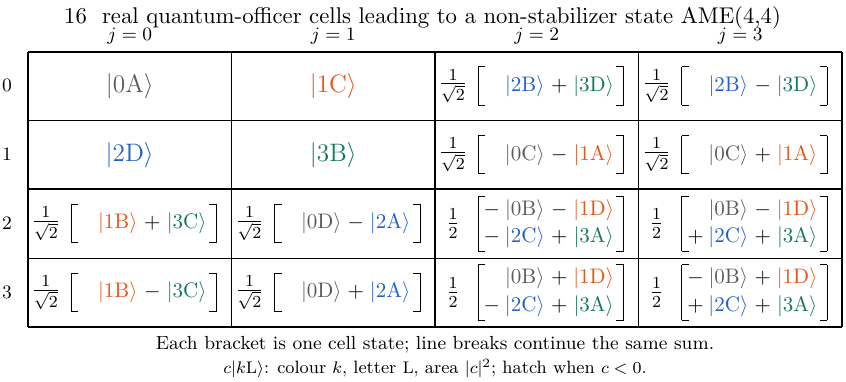}
\caption{Sixteen normalized quantum-officer cell states
inspired by the pattern in Fig.~\ref{fig:karol-qls}(c) and represented by probability areas in Fig.~\ref{fig:karol-qols}.
Numerals and letters label the two output systems. Every sign and normalization factor belongs to the tensor; the cells include product states and states of Schmidt ranks two and four.}
\label{fig:karol-states}
\end{figure}

\begin{figure}[htbp]
\centering
\includegraphics[width=\linewidth]{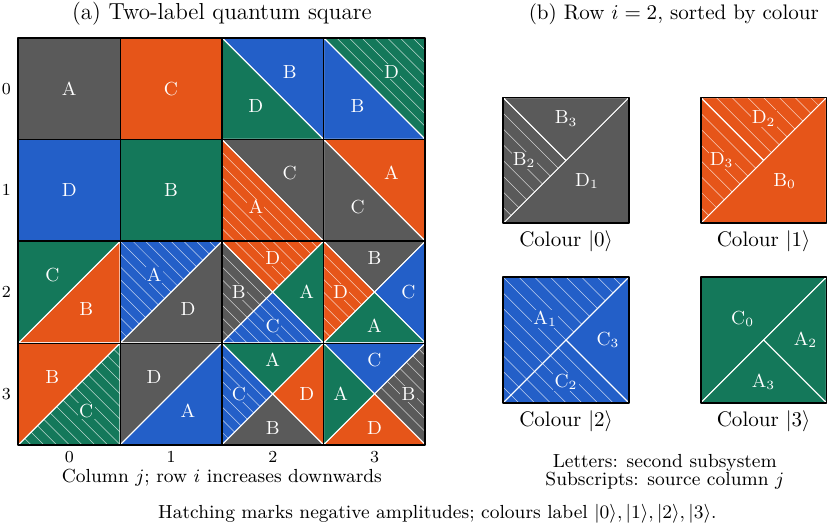}
\caption{An area representation of the order-four officer array. Its colour and letter marginals form two fractional Latin squares. Colours and letters denote its two alphabets; hatching marks negative coefficients. The four panels on the right regroup the cells of row $i=2$ by colour; the letter labels and hatching retain the second alphabet and coefficient signs.}
\label{fig:karol-qols}
\end{figure}

Although in the case of four-qutrit 
system all AME states are locally equivalent,
there exist several non-equivalent
 $AME(4,4)$ states \cite{Rather2023}.
It is easy to show that the
constructed state  (\ref{eq:karol-four-state})
is not locally equivalent to the following classical construction:
\begin{equation}
 \ket{\psi_{\rm cl}}=\frac14\sum_{x,y\in\F_4}\ket{x,y,x+y,x+\eta y},
 \qquad \eta^2+\eta+1=0.
 \label{AME44s}
\end{equation}
Indeed, the normalized Klein contraction defined in Section~\ref{sec:invariant}, with local index range four, gives $K(\psi_4)=13/4096$, whereas $K(\psi_{\rm cl})=1/256$. The two states are thus distinguished under complex local unitaries and party permutations, independently of how many terms appear in a particular basis.
Furthermore, in the finite-field Pauli convention over $\F_4$
(equivalently, two-qubit Pauli operators at each four-level site),
the state $|\psi_4\rangle$ is not a stabilizer state.
Its quantum magic, measured by the stabilizer R\'enyi-2 entropy
\cite{Leone2022}, is positive:
$M_2(\psi_4)=\ln(2048/291)\approx1.95130$,
using the natural logarithm, whereas the stabilizer state
in Eq.~\eqref{AME44s} has zero entropy in the same convention.

\subsection{The golden AME(4,6) state and the question  of a real one}
For the quantum 36 officers problem, the golden $\AME(4,6)$ state
constructed in Ref.~\cite{Rather2022} and subsequently analysed in
Refs.~\cite{Zyczkowski2023,AMEReview2026} is represented by a
two-unitary matrix with 112 nonzero entries. Their moduli are

\begin{equation}
 a=\sqrt{\frac{5-\sqrt5}{20}},\qquad
 b=\sqrt{\frac{5+\sqrt5}{20}},\qquad c=\frac1{\sqrt2},
 \qquad \frac ba=\frac{1+\sqrt5}{2}=\varphi.
\end{equation}
These are moduli in the normalized officer states. The corresponding amplitudes of the normalized four-party state have moduli $a/6$, $b/6$ and $c/6$.

Since the fraction $b/a$
is equal to the golden mean $\varphi$,
this AME state is called {\sl golden}. 
Suitable complex phases of all entries are given by powers of 
$\exp(i2\pi/20)$ and allow the state to satisfy the conditions for an AME state
and a quantum orthogonal Latin square of order six. The golden state solution, interpreted as a quantum superposition of four classical combinatorial designs called orthogonal frequency squares \cite{AMEReview2026}, relies on a suitable choice of complex phases. However, the natural question
posed as problem~(T2) of Ref.~\cite{AMEReview2026},
whether real coefficients suffice
to solve the problem,
 was left open.
The officer square in Appendix~\ref{app:golden-officers} displays its three coefficient moduli as areas, while retaining the relative phases explicitly. Restricting to a real tensor still allows phases $0$ and $\pi$ through the signs of its entries. As the next construction demonstrates, this freedom is enough to enforce all three orthogonality identities.

Under local unitaries the first flattening changes according to
\begin{equation}
 U'=(V_A\otimes V_B)U(V_C\otimes V_D)^{\mathsf T}.
 \label{eq:local-action}
\end{equation}
The transpose appears because both transformations act on ket indices. Although this action preserves all three unitary flattenings, it need not preserve the number of nonzero coefficients. We will use this freedom to pass between sparse, cyclic and geometrically convenient forms of the same state.

\section{Cyclic construction of a real 96-term
four-dice quantum state}
\label{sec:mh}
\subsection{A simultaneous cyclic shift}
Our construction began by restricting the symmetry of the tensor, while leaving its support free. The aim was to reduce the search space without imposing a particular pattern of zero coefficients. Let
\begin{equation}
 S\ket r=\ket{[r+1]_6},\qquad r=0,\ldots,5,
 \label{eq:mh-shift}
\end{equation}
where $[r]_6$ is the residue modulo six in $\{0,\ldots,5\}$. We seek a real state $\ket{\psi_{\rm cyc}}$ invariant under the simultaneous shift
\begin{equation}
 S^{\otimes4}\ket{\psi_{\rm cyc}}=\ket{\psi_{\rm cyc}}.
 \label{eq:mh-cyclic}
\end{equation}
Here all four indices move together, so ``diagonal'' refers to the group action rather than to a matrix diagonal. Each orbit contains six index tuples; specifying one coefficient per orbit reduces the number of real variables from $6^4$ to $6^3$.
The solution obtained within this ansatz has the exact form
\begin{equation}
 \ket{\psi_{\rm cyc}}=\frac1{6^4}
 \sum_{j,k,l=0}^5 c_{jkl}\sum_{s=0}^5
 \ket{s,[s+j]_6,[s+k]_6,[s+l]_6},
 \qquad c_{jkl}\in\mathbb Z[\sqrt2].
 \label{eq:mh-cyclic-state}
\end{equation}
For instance, $c_{000}=20+8\sqrt2$ and $c_{001}=-22+16\sqrt2$, and all $6^4$ amplitudes are nonzero in this basis.
For the numerical search, write
$T(x)_{ijkl}=x_{[j-i]_6,[k-i]_6,[l-i]_6}$ with $x\in\R^{6^3}$,
and let $U_\gamma(x)$ be its three balanced flattenings. The objective is
\begin{equation}
 \mathcal F(x)=\sum_{\gamma\in\{01|23,02|13,03|12\}}
 \left\|U_\gamma(x)U_\gamma(x)^{\mathsf T}-I_{36}\right\|_{\rm F}^{\,2}.
 \label{eq:cyclic-search-objective}
\end{equation}
This uses the officer normalization $T=6\psi$: at a zero of
$\mathcal F$, $\sum_{ijkl}T_{ijkl}^2=36$ and
$\sum_{jkl}x_{jkl}^2=6$. We minimized these Gram residuals by L-BFGS
with analytic gradients from random initial points. The candidate was
then refined by Newton iteration to 60 decimal digits. In the Fourier
basis, simultaneous-shift invariance restricts the support to
$m_0+m_1+m_2+m_3=0\pmod6$. Fixing eight local Fourier gauge parameters
and applying PSLQ identified coefficients in
$\Q(\sqrt2,\sqrt{-3})$; the inverse transform gives the real tensor
over $\Q(\sqrt2)$ above. The final Gram identities are exact and do
not depend on a numerical stopping tolerance.

\subsection{A local change of basis}
Although the cyclic form makes the symmetry explicit, it is unnecessarily long for specifying the state. Four real orthogonal changes of basis give a much shorter representative:
\begin{equation}
 \ket{\psi_{96}}=(O_1\otimes O_2\otimes O_3\otimes O_4)
 \ket{\psi_{\rm cyc}},\qquad O_aO_a^{\mathsf T}=I_6.
 \label{eq:mh-local-change}
\end{equation}
The matrices in Appendix~\ref{app:mh-rotations} were obtained by
minimizing the coefficient $\ell_1$ norm over four local orthogonal
matrices:
\begin{equation}
 \mathcal S(O_1,O_2,O_3,O_4)
 =\sum_{i,j,k,l}\left|
 \langle i,j,k,l|(O_1\otimes O_2\otimes O_3\otimes O_4)
 |\psi_{\rm cyc}\rangle\right|,
 \qquad O_a\in O(6).
 \label{eq:sparsification-objective}
\end{equation}
This optimization has 60 continuous local-basis parameters and leaves
the AME conditions intact. The resulting rotations and amplitudes
were reconstructed over $\Q(\sqrt2)$, then checked by exact
multiplication. With the printed matrices,
Eq.~\eqref{eq:mh-local-change} holds exactly; no thresholding of small
coefficients enters the final state.
In the first flattening, this transformation reads
\begin{equation}
 U_{96}=(O_1\otimes O_2)U_{\rm cyc}(O_3\otimes O_4)^{\mathsf T},
 \label{eq:mh-matrix-change}
\end{equation}
and the analogous transformations in the crossed cuts preserve the remaining two orthogonality conditions.

The simplification does not remove the cyclic symmetry, although that symmetry is no longer a shift of the displayed basis labels. Its generators are conjugated along with the state:
\begin{equation}
 R_a=O_aSO_a^{\mathsf T},\qquad R_a^6=I_6,\qquad
 (R_1\otimes R_2\otimes R_3\otimes R_4)\ket{\psi_{96}}
 =\ket{\psi_{96}}.
 \label{eq:mh-transformed-symmetry}
\end{equation}
In particular, $S^{\otimes4}\ket{\psi_{96}}\ne\ket{\psi_{96}}$ for the coefficients below. The two bases therefore reveal different features of the same state: the first displays the simultaneous shift, whereas the second displays 
the desired sparsity.

\subsection{Sparse real four-dice state }

In the sparse basis only two nonzero moduli remain. We denote the
corresponding tensor by $T^{(96)}$, with the normalization used in
Eq.~\eqref{eq:state}:
\begin{equation}
 \ket{\psi_{96}}=\frac16\sum_{i,j,k,l=0}^{5}
 T^{(96)}_{ijkl}\ket{i,j,k,l},
 \qquad
 T^{(96)}_{ijkl}\in
 \left\{0,\pm\frac12,\pm\frac1{\sqrt2}\right\}.
 \label{eq:mh}
\end{equation}
Thus $T^{(96)}_{ijkl}=6\langle i,j,k,l|\psi_{96}\rangle$.
Appendix~\ref{app:mh} lists the entries of $2T^{(96)}$; these belong to
$\{0,\pm1,\pm\sqrt2\}$ and require no additional tensor symbol.
In the remainder of this section, $T$ without a superscript denotes
$T^{(96)}$.
There are 48 entries of each nonzero modulus, so
\begin{equation}
 \langle\psi_{96}|\psi_{96}\rangle
 =\frac1{36}\left(\frac{48}{4}+\frac{48}{2}\right)=1.
\end{equation}
 The first flattening of $T^{(96)}$ has 24 rows with two entries of modulus $1/\sqrt2$, and 12 rows with four entries of modulus $1/2$; its columns have the same counts. The crossed flattenings instead have twelve two-entry rows and twenty-four three-entry rows, the latter with moduli $(1/\sqrt2,1/2,1/2)$.

\begin{proposition}\label{prop:mh}
All three balanced flattenings of $T^{(96)}$, defined in
Eq.~\eqref{eq:mh} and specified in Appendix~\ref{app:mh}, are orthogonal.
Consequently, $\ket{\psi_{96}}$ is a real $\AME(4,6)$ state.
\end{proposition}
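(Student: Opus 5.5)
The plan is to verify the three Gram identities directly on the sparse coefficients of Appendix~\ref{app:mh}, and then to invoke Proposition~\ref{prop:ame}. All entries of $2T$ lie in $\{0,\pm1,\pm\sqrt2\}$, so every inner product of two rows of any flattening is an element of $\mathbb Z[\sqrt2]/4$. The check is therefore exact arithmetic, with no numerical tolerance. For each cut $\gamma\in\{AB|CD,\,AC|BD,\,AD|BC\}$ I will form the $36\times36$ matrix $U_\gamma$ and show that $U_\gamma U_\gamma^{\mathsf T}=I_{36}$. Since the matrix is square and real, this already makes $U_\gamma$ orthogonal, so $U_\gamma^{\mathsf T}U_\gamma=I_{36}$ follows without a separate check.

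\emph{Diagonal entries.} These are the row norms, and the support counts stated before the proposition settle them.
\begin{itemize}
\item In the first flattening, a row with two entries of modulus $1/\sqrt2$ has norm $2\cdot\tfrac12=1$, and a row with four entries of modulus $1/2$ has norm $4\cdot\tfrac14=1$.
\item In the crossed flattenings, a two-entry row has norm $1$ in the same way. A three-entry row with moduli $(1/\sqrt2,1/2,1/2)$ has norm $\tfrac12+\tfrac14+\tfrac14=1$.
\end{itemize}
So the diagonal conditions reduce to confirming these counts from the table. They are equivalent to the six pair-marginal conditions on $Q=|T|^2$.

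\emph{Off-diagonal entries.} This is the real content. Two rows can fail to be orthogonal only if their supports overlap. So for each cut I will list, for every row, its support as a set of column labels, and examine only the pairs of rows whose supports intersect. Because rows contain at most four entries, each row meets only a few others. For each such pair I expect the overlap to consist of two, or sometimes four, positions with products of equal modulus and opposite signs. Typical cases are $\tfrac12\cdot\tfrac12$ against $-\tfrac12\cdot\tfrac12$, or $\tfrac1{\sqrt2}\cdot\tfrac12$ against $-\tfrac1{\sqrt2}\cdot\tfrac12$. Mixed moduli should appear only in the crossed cuts.

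To shorten the bookkeeping, I will use the symmetry carried over from the cyclic form. The order-six operator $R_1\otimes R_2\otimes R_3\otimes R_4$ of Eq.~\eqref{eq:mh-transformed-symmetry} fixes $\ket{\psi_{96}}$, and the same holds for the analogous operators in each crossed cut. Whenever the $R_a$ act as signed permutations on the sparse support, the overlapping pairs can be grouped into orbits, and only one representative per orbit needs checking.

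As an independent cross-check, I will note that Eq.~\eqref{eq:mh-local-change} holds exactly. Since $\ket{\psi_{\rm cyc}}$ satisfies the Gram identities exactly and local orthogonal maps preserve them by Eq.~\eqref{eq:local-action}, the result also follows from the exact cyclic solution.

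The main obstacle is organizing the crossed flattenings. There, three-entry rows with mixed moduli can overlap a given row in one, two or three positions. An overlap in a single position can never cancel, so I must confirm from the table that no such overlap occurs. I must also confirm that overlaps mixing $\sqrt2$ and rational products balance separately in $\mathbb Z[\sqrt2]$. Once all three flattenings are shown orthogonal, Proposition~\ref{prop:ame} gives that $\ket{\psi_{96}}$, which has real amplitudes, is an $\AME(4,6)$ state.
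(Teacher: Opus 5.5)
Your plan is correct and is essentially the paper's proof: an exact entrywise check of the three Gram matrices, followed by Proposition~\ref{prop:ame}; your requirement that the $\sqrt2$ and rational parts cancel separately is precisely the paper's splitting $2U=A+\sqrt2B$ into the integer identities $AA^{\mathsf T}+2BB^{\mathsf T}=4I_{36}$ and $AB^{\mathsf T}+BA^{\mathsf T}=0$. Do not rely on your two shortcuts, however: with the printed $O_3$, the matrix $R_3=O_3SO_3^{\mathsf T}$ has first row $(0,0,0,-1/\sqrt2,1/\sqrt2,0)$, so the order-six symmetry does not permute the sparse support, and the cyclic cross-check is not independent, since the paper specifies $\ket{\psi_{\rm cyc}}$ only through $T^{(96)}$ and the $O_a$ in Eq.~\eqref{eq:mh-inverse}.
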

\begin{proof}
For each balanced flattening $U$ of $T^{(96)}$, write
$2U=A+\sqrt2 B$, where $A$ and $B$ have integer entries. Then
\begin{equation}
 UU^{\mathsf T}=\frac14\left[
 AA^{\mathsf T}+2BB^{\mathsf T}
 +\sqrt2\left(AB^{\mathsf T}+BA^{\mathsf T}\right)\right].
\end{equation}
Since $1$ and $\sqrt2$ are linearly independent over $\mathbb Q$,
the identity $UU^{\mathsf T}=I_{36}$ is equivalent to
\begin{equation}
 AA^{\mathsf T}+2BB^{\mathsf T}=4I_{36},\qquad
 AB^{\mathsf T}+BA^{\mathsf T}=0.
 \label{eq:integer-gram}
\end{equation}
The coefficients in Appendix~\ref{app:mh} satisfy both identities in
all three cuts, including every off-diagonal entry. Both identities can be easily checked by integer matrix multiplication. The AME property follows from
Proposition~\ref{prop:ame}.
\end{proof}

\subsection{Thirty-six real quantum officers}
\label{sec:mh-officer-square}
The sparse representative admits a direct reading as a square of quantum
officers. Keep the first two indices as the row and column labels, and
regard the remaining two as the officer's two attributes:
\begin{equation}
 |\Phi_{ij}\rangle=\sum_{k,l=0}^{5}T^{(96)}_{ijkl}|k,l\rangle,
 \qquad
 |\psi_{96}\rangle=\frac16\sum_{i,j=0}^{5}|i,j\rangle\otimes|\Phi_{ij}\rangle.
 \label{eq:mh-officer-array}
\end{equation}
Here $T^{(96)}$ is the officer coefficient tensor defined in Eq.~\eqref{eq:mh}; the global amplitudes are $T^{(96)}_{ijkl}/6$. Numerals $0,\ldots,5$ label the first output system,
while letters $\mathrm A,\ldots,\mathrm F$ label the second. This change
of notation does not rotate or permute the tensor.

Within each cell the nonzero coefficients have equal moduli. Twenty-four
cells contain two terms with coefficients $\pm1/\sqrt2$, and twelve
contain four terms with coefficients $\pm1/2$. Moreover, no numeral or
letter is repeated within a cell. Its displayed expansion is therefore
already a Schmidt decomposition, up to the signs of the basis vectors.
The respective Schmidt ranks are two and four, with entanglement entropies
one and two bits. These ranks refer to the two output systems after fixing
the row and column; they are not ranks across a balanced cut of the
four-party AME state.

For example, the first and third cells of the first row are
\begin{align}
 |\Phi_{00}\rangle&=\frac{|1\mathrm C\rangle-|2\mathrm F\rangle}{\sqrt2},
 \\
 |\Phi_{02}\rangle&=\frac{-|1\mathrm B\rangle+|3\mathrm A\rangle
                         -|4\mathrm D\rangle+|5\mathrm F\rangle}{2}.
\end{align}
All 36 officers form an orthonormal basis of
$\mathbb C^6\otimes\mathbb C^6$. The two crossed orthogonality identities
are also satisfied, as follows from the exact integer Gram checks for
$T^{(96)}$. Thus the square retains the full real tensor, including the
relative signs responsible for coherent cancellation.

\begin{figure}[htbp]
 \centering
 \includegraphics[width=\linewidth]{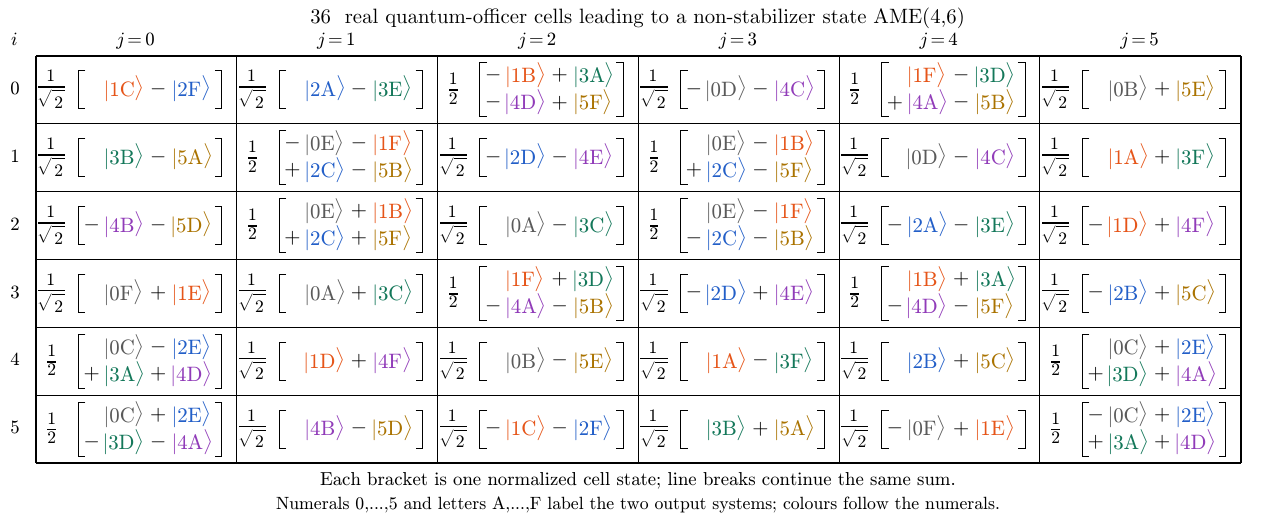}
 \caption{Thirty-six quantum-officer cell states of the real
 $\mathrm{AME}(4,6)$ state
 (Eq.~\eqref{eq:mh}) represented in Fig. \ref{fig:mh-officer-areas}. Numerals and letters label the two output systems;
 colour follows the numeral and does not encode a phase.
 The layout follows the order-four example: 24 cells contain two terms
 on a single line, while 12 contain four terms on two lines.
 All signs and normalization factors are retained. Each cell is
 maximally entangled on its two- or four-dimensional Schmidt support.
 For all coefficients without graphical encodings, see
 Table~\ref{tab:mh}.}
 \label{fig:mh-real-officers}
\end{figure}

\paragraph{Two fractional Latin squares.}
The two attributes give a useful probabilistic projection of the officer
array. Define
\begin{equation}
P_{ijk}=\sum_l|T^{(96)}_{ijkl}|^2,\qquad
 Q_{ijl}=\sum_k|T^{(96)}_{ijkl}|^2.
 \label{eq:mh-two-fractional}
\end{equation}
Both arrays have entries in $\{0,1/4,1/2\}$ and satisfy the three
line-sum conditions of a fractional Latin square. For example,
\begin{equation}
 \sum_kP_{ijk}=\sum_jP_{ijk}=\sum_iP_{ijk}=1,
\end{equation}
where each sum holds for every choice of the two remaining indices;
the analogous identities hold for $Q$.
For this representative, the one-party cell marginals are diagonal,
so these probabilities give their complete spectra and basis entries.

The two fractional squares should not be confused with a pair of pure
quantum Latin squares. Every officer in this displayed basis is
entangled, and hence cannot be written as
$|a_{ij}\rangle\otimes|b_{ij}\rangle$. Nor do the probabilities retain
all the information in the joint cell state. In $|\Phi_{00}\rangle$,
for instance, only the pairs $1\mathrm C$ and $2\mathrm F$ occur,
whereas multiplying its two marginal distributions would also produce
$1\mathrm F$ and $2\mathrm C$. The generalized quantum-officer array,
the two fractional Latin squares, and an ordinary product-cell pair of
QLS are therefore distinct objects.
\paragraph{A rational area representation.}
The two output alphabets can also be displayed within a single square,
as in the order-four example. Associate colour $k$ and letter
$\mathrm L_l$ with the term $T_{ijkl}|k,l\rangle$ in cell $(i,j)$.
Its region has area $|T_{ijkl}|^2$, with the area of a cell set to one;
hatching records a negative coefficient. For the present tensor, the
only nonzero areas are $1/2$ and $1/4$. The picture thus uses equal
halves in 24 cells and equal quarters in the remaining 12, while
retaining all 96 signs and ordered pairs of output labels.

The line sums of Eq.~\eqref{eq:mh-two-fractional} express the corresponding area balance. Consequently,
the fragments of each colour, collected along any row or column, have
total area one; the same is true for each letter. In this example they
can also be assembled into a unit square: their areas are either
$1/2+1/2$, $1/2+1/4+1/4$, or four quarters, and the triangular pieces
fit by rotations. Figure~\ref{fig:mh-officer-areas} illustrates this
assembly for the first row. This is a cut-and-reassemble representation,
not a claim of simultaneous tiling in the original in-cell positions.

The rationality concerns probabilities, not the amplitudes themselves,
which lie in $\mathbb Q(\sqrt2)$. With the hatching retained, the picture
specifies the full real tensor: a fragment of area $w$ represents the
coefficient $\pm\sqrt w$ in the normalized officer and
$\pm\sqrt w/6$ in the four-party state. Omitting the hatching would
discard the signs that enforce orthogonality. Thus the fractional
squares describe the probability balance, whereas the signed
quantum-officer array carries the coherent information required for AME.

\begin{figure}[htbp]
 \centering
 \includegraphics[width=\linewidth]{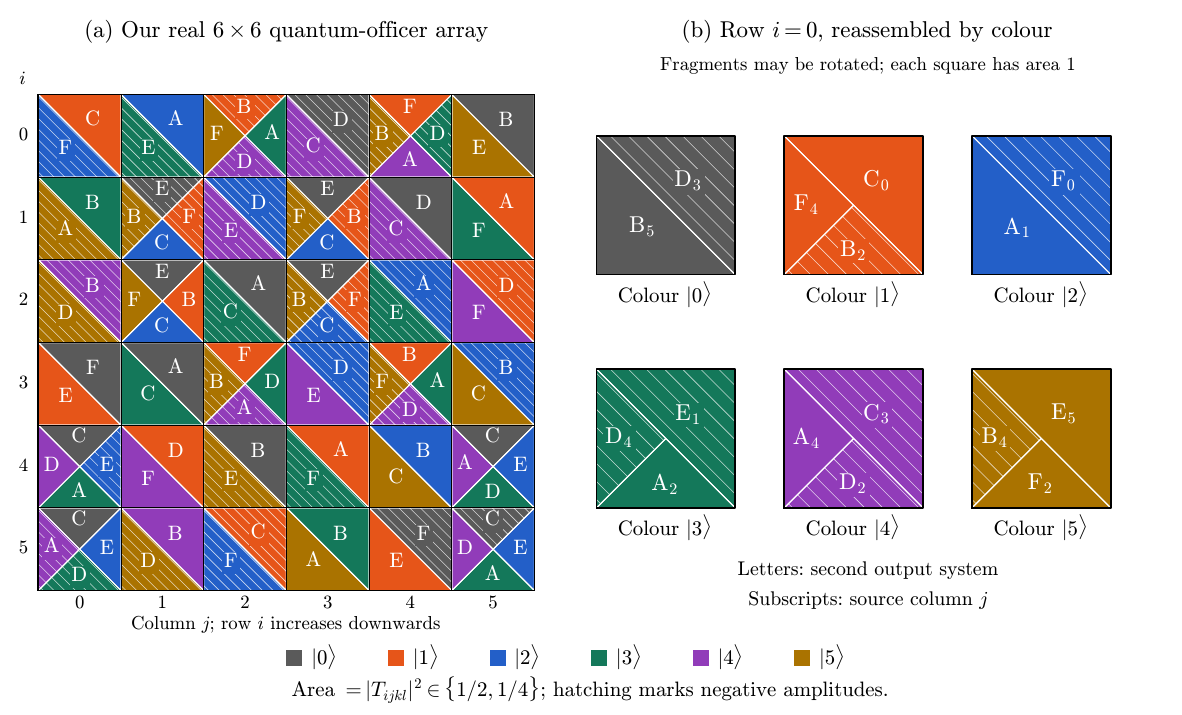}
 \caption{Geometric visualisation of real $\mathrm{AME}(4,6)$ 
 state (Eq.~\eqref{eq:mh})
 in its 96-term representation:
 (a) Each cell contains the normalized officer
 $|\Phi_{ij}\rangle=\sum_{k,l}T_{ijkl}|k,l\rangle$.
 Colour labels $k$, the letter labels $l$, area is $|T_{ijkl}|^2$,
 and hatching marks $T_{ijkl}<0$.
 (b) The fragments of row $i=0$, regrouped by colour and rotated as
 needed, form six unit squares. Letter subscripts give their source
 column $j$. The same area and reassembly statements hold for every
 row and column and for either alphabet; no change of quantum basis
 or tensor coefficient is involved. The colour and letter marginals
 define the two fractional Latin squares associated with this real
 AME state; hatching retains the coefficient signs.}
 \label{fig:mh-officer-areas}
\end{figure}

\begin{corollary}\label{cor:96-192-layout}
For the colour index $k$ in Eq.~\eqref{eq:mh-two-fractional}, the
displayed 96-term representative admits no fixed-position layout in
the sense of Proposition~\ref{prop:fixed-position}. The locally
equivalent 192-term representative in Eq.~\eqref{eq:mh192-state}
does admit such a layout.
\end{corollary}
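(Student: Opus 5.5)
By Proposition~\ref{prop:fixed-position}, the corollary is equivalent to two statements about colour arrays. The first is that the array $P$ of Eq.~\eqref{eq:mh-two-fractional}, built from $T^{(96)}$, is \emph{not} a convex combination of Latin-square incidence tensors of order six. The second is that the colour array $P'$ of the 192-term representative \emph{is} such a combination. The plan therefore has two independent halves: an obstruction for the 96-term array and an explicit decomposition for the 192-term one.

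For the negative part, I would use the support structure of $P$. Its entries lie in $\{0,1/4,1/2\}$. Any decomposition $P=\sum_L w_L P^L$ with $w_L>0$ uses only Latin squares $L$ satisfying $L_{ij}\in\operatorname{supp}P_{ij\cdot}$ for every cell. The first step is to read off from Appendix~\ref{app:mh} the allowed symbol sets: 24 cells have two allowed colours and 12 have four. The second step is to look for a small obstruction. The cheapest one is a row, column or symbol subsystem in which the support pattern forces a contradiction. For example, a set of cells in one row whose allowed colours are contained in a smaller set of colours would violate Hall's condition for the row permutation. Failing that, I would look for a pair of half-weight cells linked so that choosing a colour in one cell forces the same colour twice in some column.

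If no such local contradiction exists, I would instead prove infeasibility of the linear program $P=\sum_L w_LP^L$, $w\ge 0$, by a Farkas certificate. This means exhibiting a real array $Y_{ijk}$, supported on the nonzero entries of $P$, such that $\sum_{ij}Y_{ij L_{ij}}\ge 0$ for every Latin square $L$ compatible with the support, while $\langle Y,P\rangle<0$. Checking the first inequality means enumerating all support-compatible Latin squares. With at most two or four options per cell this is a small exact-cover search, feasible and exhaustive, possibly leading to the conclusion that no compatible Latin square exists at all. In that case the certificate is trivial, and the finite-enumeration statement is the proof. I expect this to be the actual situation, and the main obstacle is to present the enumeration, or a hand-checkable sub-obstruction, convincingly rather than as a computer claim.

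For the positive part, I would take the 192-term state of Eq.~\eqref{eq:mh192-state}. Its local equivalence to $\ket{\psi_{96}}$ comes from the stated orthogonal transformations, which I would verify by exact multiplication over $\Q(\sqrt2)$ as in Proposition~\ref{prop:mh}. Its colour array $P'$ has rational entries, presumably quarters and eighths. I would exhibit explicit Latin squares $L_1,\dots,L_m$ and rational weights $w_\alpha$ with $P'=\sum_\alpha w_\alpha P^{L_\alpha}$, checked entrywise as in Eq.~\eqref{eq:sudoq-decomposition}; the appendix's "fixed-position colour sums" presumably record exactly this. The strip construction in the converse direction of Proposition~\ref{prop:fixed-position} then produces the layout. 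Finding the decomposition itself is a routine linear-programming and rounding step. The corollary then follows because the two representatives differ only by local orthogonal maps, which shows that layout existence is basis-dependent.
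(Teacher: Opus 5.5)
Your plan is essentially the paper's proof: restrict to Latin squares compatible with the support of $P$, separate $P$ from their convex hull by a linear functional, and for $P'$ exhibit an explicit mixture (the paper uses eight Latin squares of weight $1/8$ each, realized as eight vertical strips at fixed in-cell positions). One correction: your expectation that no compatible Latin square exists is wrong. There are exactly eight, and this is checkable by hand because in every row and column two colours are confined to two half-weight cells, so Hall-type or local-contradiction searches cannot succeed; however, all eight satisfy $L_{0,2}\in\{1,5\}$, which gives the certificate $P_{0,2,1}+P_{0,2,5}=1$ for every mixture versus $\tfrac14+\tfrac14=\tfrac12$ for the actual array.
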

\begin{proof}
Appendix~\ref{app:fixed-colour} enumerates the eight Latin squares
supported by the 96-term array and gives a linear identity satisfied
by every mixture of them, but violated by that array. For the
192-term array, Eq.~\eqref{eq:mh192-latin-mixture} gives an explicit
mixture of eight Latin squares. Proposition~\ref{prop:fixed-position}
then gives the two conclusions.
\end{proof}
This distinction concerns the selected colour alphabet and the
displayed bases, not the minimum support in the LU orbit. The
192-term layout does not simultaneously satisfy the analogous
letter rule. Its local rotation and colour assignment are given in
Appendix~\ref{app:fixed-colour}; Table~\ref{tab:mh192-officers} lists
all transformed officers. We retain the simpler 96-term picture here.

\section{Three real AME(4,6) states, two supports and one changing sign}
\label{sec:invariant}

The coefficients offer two rather different views of a state. Their
moduli determine the fractional Latin arrays and the coloured area
pictures, whereas the relative signs determine the cancellations needed
for orthogonality. A comparison with the construction of
Bruzda~\cite{WB26} makes this distinction particularly transparent.
Throughout this section, $T_{ijkl}=6\langle i,j,k,l|\psi\rangle$ for the normalized state under consideration. Our 96-term tensor is denoted by $T^{\mathrm{HZ}}=T^{(96)}$  and Bruzda's 60-term tensor by $T^{\rm WB}$.
The third tensor below is not an independent seed: it is obtained from
$T^{\rm WB}$ by reversing one coefficient. We call the original
representative \emph{Bruzda A} and this sign variant \emph{Bruzda B}.

\subsection{What the blocks do, and do not, tell us}

For each balanced cut, join a row label to a column label whenever the
corresponding matrix entry is nonzero. The connected components of this
bipartite graph determine the largest separation into blocks obtainable
by row and column permutations alone. For an orthogonal matrix every
component has the same number of row and column vertices.
Figure~\ref{fig:three-real-blocks} displays these components for the three
flattenings of each state.

In the $01|23$ cut, our tensor contains twelve $2\times2$ blocks and
one connected $12\times12$ block. Its other two cuts contain six
$2\times2$ and six $4\times4$ blocks. Bruzda's tensor instead has
twelve scalar blocks and twelve $2\times2$ blocks in every cut.
The $12\times12$ block cannot be split further by permutations alone;
this statement concerns the displayed support, not arbitrary local
changes of basis or the smallest support in the LU orbit.

The permutations in this figure only reorder the composite row and
column labels. They need not factor into permutations on the individual
particles and are therefore not being used as physical local operations.

\begin{figure}[tbp]
\centering
\includegraphics[width=\linewidth]{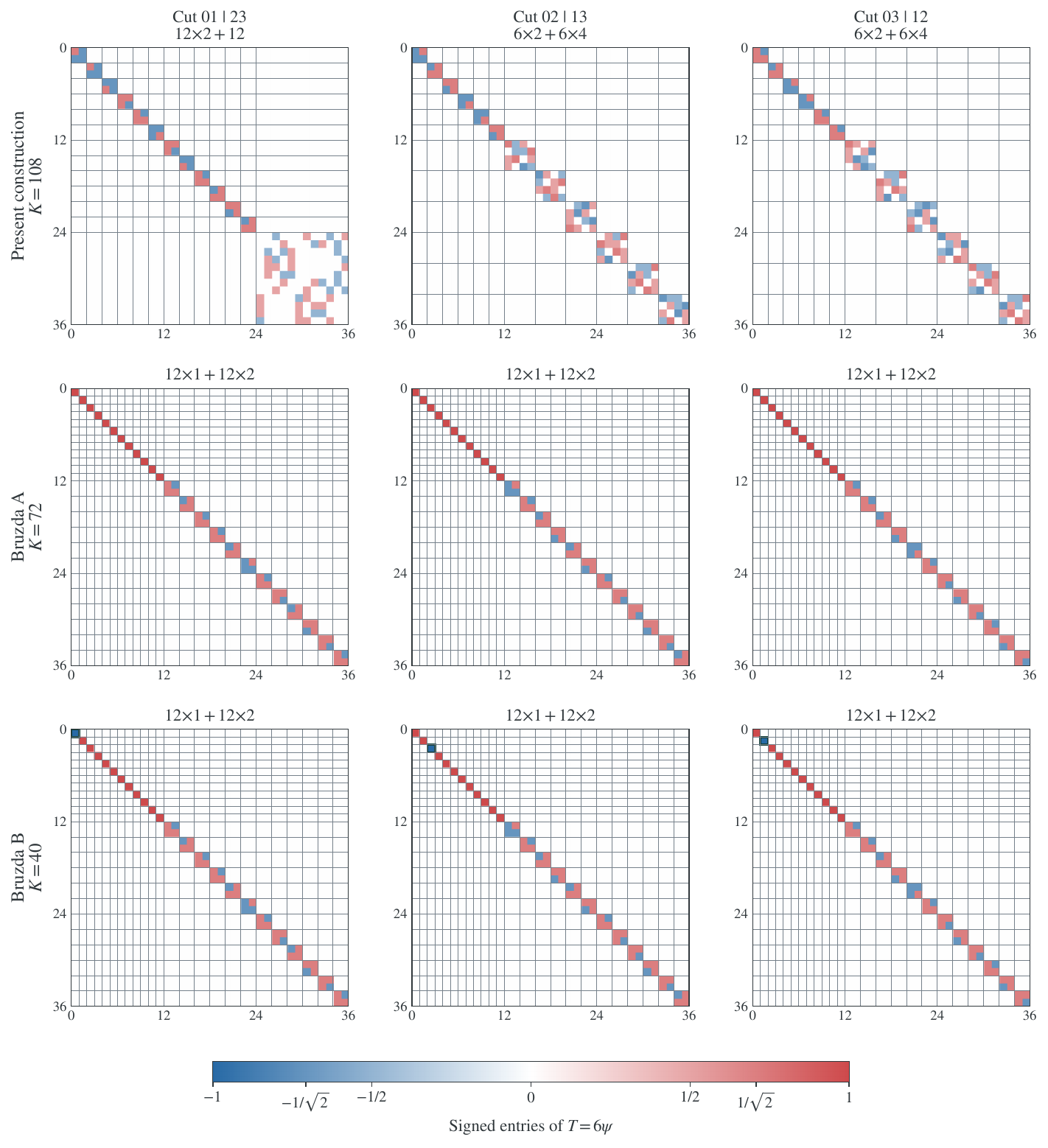}
\caption{The three balanced flattenings of three real AME$(4,6)$ states,
with rows and columns grouped by their exact support components.
Entries are those of $T=6\psi$; component sizes are written above each
panel. The lower two rows have identical support and differ only at
$T_{0132}$, outlined in dark green. The permutations are a display convention,
not an assertion of local equivalence. Blue and red encode negative and
positive entries, respectively; white denotes a zero coefficient.}
\label{fig:three-real-blocks}
\end{figure}

\subsection{An invariant separates the three states}

To distinguish entanglement classes we need a quantity unchanged by
local changes of basis. We use the degree-eight invariant
of Rather \emph{et al.}~\cite{Rather2023}, Eqs.~(19)--(20), with
the identity and the three nontrivial translations of $\F_2^2$ as
its four permutations. For four index tuples
$a,b,c,d\in\{0,\ldots,5\}^{4}$, it takes the tensor form
\begin{equation}
\begin{split}
 K(T)={} \sum_{a,b,c,d}T_aT_bT_cT_d\,
\overline{T_{a_0b_1c_2d_3}}\,
 \overline{T_{b_0a_1d_2c_3}}
  \overline{T_{c_0d_1a_2b_3}}\,
\overline{T_{d_0c_1b_2a_3}}.
\end{split}
\label{eq:klein}
\end{equation}
Every local index is paired between a tensor and a conjugate tensor, so
the local unitary matrices cancel. Thus $K$ is invariant under
$U(6)^{\otimes4}$. Its four matchings form the Klein group, which also
makes it invariant under permutations of the four particles; the
relabelling argument is given in Appendix~\ref{app:sign-witnesses}.
For normalized states, $K(\psi)=K(T)/6^8$.
The invariant and the use of probability-preserving
phase families are inherited from Ref.~\cite{Rather2023}. Here we
evaluate this invariant on the three real tensors, derive the
isolated-coefficient family below, and give the party-permutation
argument needed for this comparison.

In Bruzda's tensor, the entry $T^{\rm WB}_{0132}=1$ is the sole nonzero
entry of its row and column in each of the three flattenings. Reversing
its sign therefore preserves orthogonality:
\begin{equation}
 T^{{\rm WB},-}=T^{\rm WB}-2E_{0132},
 \qquad \psi^{{\rm WB},-}=T^{{\rm WB},-}/6,
 \label{eq:WB-sign-variant}
\end{equation}
where $E_{0132}$ is the unit coordinate tensor. This small change has a
consequence that the support picture alone does not reveal. Exact
evaluation gives
\begin{equation}
 K(T^{{\rm WB},-})=40,\qquad K(T^{\rm WB})=72,\qquad
 K(T^{\mathrm{HZ}})=108.
 \label{eq:three-seed-K}
\end{equation}
The three states are consequently pairwise inequivalent under complex
local unitaries, even if permutations of the particles are allowed.
Yet the two 60-term states have the same coefficient probabilities,
and hence the same fractional Latin arrays and any representation that
depends only on these probabilities. This is one reason to keep the signs
visible in the quantum-officer figures.

\subsection{The two real endpoints belong to a complex phase family}

The same isolated coefficient may acquire an arbitrary phase rather than
only a minus sign. The tensor
\begin{equation}
 T^{\rm WB}(\theta)=T^{\rm WB}+(e^{i\theta}-1)E_{0132}
 \label{eq:isolated-phase-family}
\end{equation}
remains two-unitary for every real $\theta$, since this phase never
enters an off-diagonal row inner product. Its invariant is
\begin{equation}
 K(T^{\rm WB}(\theta))=56+16\cos\theta.
 \label{eq:WB-phase-K}
\end{equation}
It varies strictly monotonically for $0\leq\theta\leq\pi$ and therefore
distinguishes a continuous family of complex LU classes with identical
coefficient probabilities. The displayed real endpoints are
$T^{\rm WB}$ and $T^{{\rm WB},-}$. The value 108 of our real tensor lies
outside the interval $[40,72]$, so no state in this particular family is
locally equivalent to it. This is not a classification of all real or
complex AME$(4,6)$ states.

\begin{figure}[tbp]
\centering
\includegraphics[width=\linewidth]{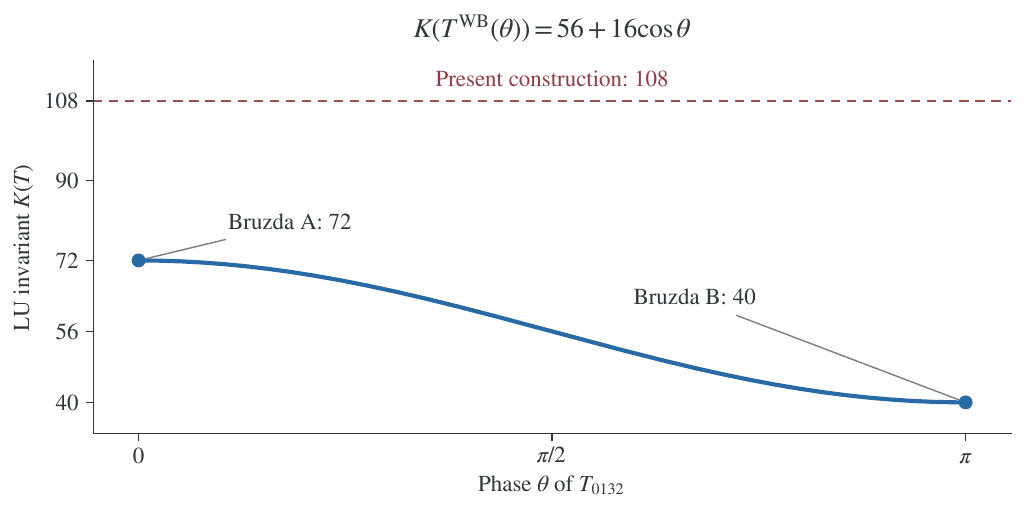}
\caption{The invariant varies along the isolated-coefficient phase
family although all coefficient probabilities remain fixed. Its two
real endpoints are Bruzda A state at $\theta=0$ and Bruzda B state
at $\theta=\pi$.
The dashed line marks $K(T^{\mathrm{HZ}})=108$, outside the interval covered
by this family. The normalization is $T=6\psi$.}
\label{fig:three-real-invariant}
\end{figure}

\subsection{Stabilizer R\'enyi entropy}
The three representatives can also be compared through their
nonstabilizerness, or quantum magic. We use the stabilizer R\'enyi
entropy~\cite{Leone2022} in its Weyl--Heisenberg form for
qudits~\cite{Klich2024}. For a normalized four-party state $|\psi\rangle$, set
\begin{equation}
 \begin{split}
 M_2(\psi)&=-\ln\!\left[
 \frac1{6^4}\sum_{\mathbf a,\mathbf b\in\mathbb Z_6^4}
 \left|\langle\psi|W_{\mathbf a,\mathbf b}|\psi\rangle\right|^4
 \right],\\
 W_{\mathbf a,\mathbf b}&=\bigotimes_{r=0}^{3}X^{a_r}Z^{b_r},\qquad
 X|j\rangle=|[j+1]_6\rangle,\quad
 Z|j\rangle=e^{2\pi ij/6}|j\rangle.
 \end{split}
 \label{eq:magic-M2}
\end{equation}
The logarithm is natural, so the values in
Table~\ref{tab:three-real-classes} are in nats. We evaluate all $6^8$
Weyl operators; for the three real representatives the fourth moments
are also certified exactly over $\Q(\sqrt2)$, as specified in
Appendix~\ref{app:sign-witnesses}.

The 96-term representative has the largest $M_2$ among the three.
For comparison, the original 112-term golden state~\cite{Rather2022}
gives $M_2=5.59179$, also below $5.89050$ for our representative.
This is a comparison of representatives in the stated cyclic Weyl convention, which is not a basis-independent ranking of the three entanglement classes.
$M_2$ is invariant
under Clifford transformations, but not under arbitrary local
unitaries. It therefore complements the invariant $K$ rather than
replacing it, and the ordering in the table is not an ordering of
LU classes by magic.

\begin{table}[tbp]
\centering
\small
\renewcommand{\arraystretch}{1.16}
\setlength{\tabcolsep}{6pt}
\begin{tabular}{@{}lrlrr@{}}
\toprule
Representative & Support & Moduli of $T$ (counts) & $K(T)$ & $M_2(\psi)$\\
 & & & & (nats)\\
\midrule
Present construction & 96 & $1/2$ (48), $1/\sqrt2$ (48) & 108 & 5.89\\
Bruzda A~\cite{WB26} & 60 & $1$ (12), \ \ \ $1/\sqrt2$ (48) & 72 & 5.58\\
Bruzda B, Eq.~\eqref{eq:WB-sign-variant} & 60
 & $1$ (12), \ \ \ $1/\sqrt2$ (48) & 40 & 5.57\\
\bottomrule
\end{tabular}
\par\smallskip
\begin{tabular}{@{}lcc@{}}
\toprule
Support components & Cut $01|23$ & Cuts $02|13$ and $03|12$\\
\midrule
Present construction & $12\,[2]+[12]$ & $6\,[2]+6\,[4]$\\
Bruzda A and B & $12\,[1]+12\,[2]$ & $12\,[1]+12\,[2]$\\
\bottomrule
\end{tabular}
\caption{Three real $\AME(4,6)$ representatives.
Moduli and $K$ refer to $T=6\psi$; $M_2$ refers to the normalized
state and the Weyl convention of Eq.~\eqref{eq:magic-M2}.
In the lower panel, $[m]$ denotes one connected $m\times m$ support
block after display permutations; the preceding integer is its
multiplicity. Support and $M_2$ depend on the displayed basis and have
not been optimized over the LU orbit. Different values of $K$ distinguish LU classes, whereas equal values alone do not establish equivalence.}
\label{tab:three-real-classes}
\end{table}

\section{From fractional Latin squares to fractional cubes and hypercubes}
\label{sec:higher}
The officer construction has a direct counterpart for an odd number of parties. The essential change is that the two sides of a largest balanced cut have different dimensions. For an $\AME(2m+1,d)$ state write
\begin{equation}
 |\Psi\rangle=d^{-m/2}\sum_{\mathbf a\in\{0,\ldots,d-1\}^m}
 |\mathbf a\rangle|\Phi_{\mathbf a}\rangle,
 \qquad
 T_{\mathbf a,\mathbf b}
 =d^{m/2}\langle\mathbf a,\mathbf b|\Psi\rangle.
 \label{eq:odd-officer}
\end{equation}
Here $T$ collects the coefficients of the normalized cell vectors:
$|\Phi_{\mathbf a}\rangle=\sum_{\mathbf b}T_{\mathbf a,\mathbf b}|\mathbf b\rangle$.
The $d^m$ cell states $|\Phi_{\mathbf a}\rangle$ are orthonormal vectors in $(\C^d)^{\otimes(m+1)}$. Thus $T$ is a $d^m\times d^{m+1}$ matrix with $TT^\dagger=I_{d^m}$; its adjoint is an isometry. It is not a square unitary matrix. This is the higher-party quantum-combinatorial setting discussed in Ref.~\cite{Goyeneche2018}, rather than an ordinary one-qudit quantum Latin cube.
Choose one of the output systems as a colour label $k$ and sum over the other $m$ outputs, denoted by $\mathbf c$:
\begin{equation}
 P_{a_1\cdots a_m k}=\sum_{\mathbf c}|T_{\mathbf a;k,\mathbf c}|^2.
 \label{eq:higher-probabilities}
\end{equation}
This array is a fractional Latin hypercube: summing over any one of its $m+1$ indices gives one, with the others fixed. Normalization of each cell proves the sum over $k$. For a sum over $a_r$, the remaining indices specify $m-1$ input systems and one output system. Their marginal in $|\Psi\rangle$ is $I/d^m$; multiplying its diagonal entries by $d^m$ proves the other line sums. The argument uses the AME property, not an assumed tiling.

The converse need not hold. Line sums fix diagonal marginal entries, whereas AME also fixes every off-diagonal entry. The examples below make this distinction visible, and show why one should keep a phase panel alongside the probability picture.

\subsection{A six-colour cube from the seven-dice state}
\label{sec:latin-cube}
For seven parties, the local dimension matters: it was proved that $\AME(7,2)$ does not exist~\cite{Huber2017}.
The six-level construction considered here is not subject to that
qubit-specific result.
We analyse the $\AME(7,6)$
state constructed in Ref.~\cite{Shi2026} and
specified in Appendix~\ref{app:seven-party}. Set $m=3$, label the cells by $(a,b,c)=(v_0,v_1,v_2)$ and select $k=v_5$ as the colour. The four output systems are $(v_3,v_4,v_5,v_6)$. In this basis every normalized cell has 162 nonzero amplitudes $\omega^r/\sqrt{162}$, where $\omega=\exp(2\pi i/3)$ and $r\in\{0,1,2\}$. The $6^3$ cell vectors form the rows of a $6^3\times6^4$ coefficient matrix.

The probability array has a particularly simple expression. If $\beta(v)=\lfloor v/3\rfloor$ is the binary part of a six-level label, then
\begin{equation}
 P_{abck}=\frac13\,\mathbf1_{\,\beta(k)=\beta(a)\oplus\beta(b)\oplus\beta(c)},
 \label{eq:cube-palettes}
\end{equation}
where $\oplus$ is addition modulo two. Each cell therefore contains either colours $0,1,2$ or colours $3,4,5$, in equal proportions. Along any of the $3\cdot6^2$ axial lines, each colour occurs in three cells with weight $1/3$. Its three strips assemble into a unit square. The repeated palettes identify probability distributions, not the underlying quantum cell states.

\begin{figure}[htbp]
 \centering
 \includegraphics[width=\linewidth]{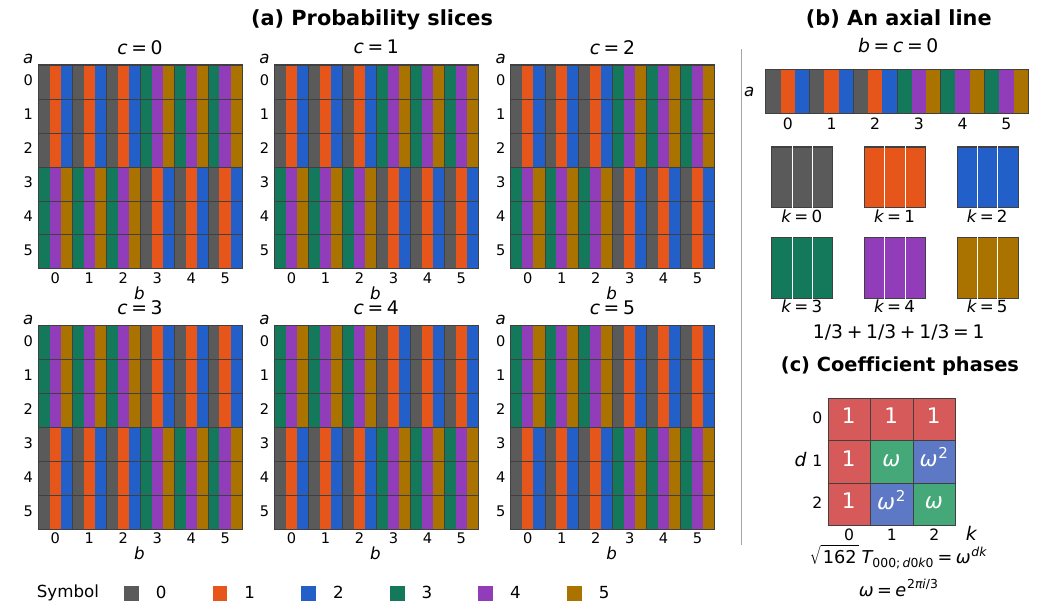}
 \caption{A fractional Latin cube obtained from the $\AME(7,6)$ state of Ref.~\cite{Shi2026}. (a) Six slices: rows $a$, columns $b$, layer $c$; colour is the outcome $k$ of the sixth particle. (b) An axial line contributes three one-third strips of each colour. (c) A phase-sensitive detail of cell $(0,0,0)$: for $e=f=0$ and $d,k=0,1,2$, the coefficients obey $\sqrt{162}\,T_{000;d0k0}=\omega^{dk}$. Colours label symbols in (a,b), but phases in (c). Equal areas alone do not reveal these phases.}
 \label{fig:ame76latin}
\end{figure}

Exact evaluation verifies all 35 three-party marginals, and direct counting verifies all $4\cdot6^3$ Latin line sums. Removing the phases leaves Eq.~\eqref{eq:cube-palettes} unchanged but destroys maximal entanglement. We use the cube to display this probability projection; no diagonal or simultaneous fixed-position tiling condition is part of the fractional Latin definition.

\subsection{Hadamard blocks behind the seven-party cube}
\label{sec:seven-hadamard}
The seven-party coefficient matrix contains 48 complex Hadamard blocks
of order 27. To locate them, write each site label as $v_i=3b_i+x_i$,
with $b_i\in\F_2$ and $x_i\in\F_3$, and let
$a=4b_0+2b_1+b_2\in\{0,\ldots,7\}$ label the input binary triple.
The output bits determine this triple and a branch
$\epsilon\in\{0,1\}$ of the source state. Within each group, order
the rows by $(x_0,x_1,x_2)$ and the columns by
$(\epsilon,x_4,x_3,x_5,x_6)$, all lexicographically.
The resulting display permutations $\Pi_r,\Pi_c$ give
\begin{equation}
 \Pi_r T\Pi_c=
 \bigoplus_{a=0}^{7}\frac1{\sqrt6}
 \begin{bmatrix}V_{a,0}&\cdots&V_{a,5}\end{bmatrix},
 \qquad V_{a,s}=\frac{H_{a,s}}{\sqrt{27}},
 \qquad H_{a,s}H_{a,s}^{\dagger}=27I_{27}.
 \label{eq:seven-hadamard-blocks}
\end{equation}
There are eight diagonal rectangles of size $27\times162$, each split
into six squares indexed by $s=3\epsilon+x_4\in\{0,\ldots,5\}$.
Using zero-based indices, the square $H_{a,s}/\sqrt{162}$ occupies
\begin{equation}
 \begin{split}
 \text{rows:}\quad&27a,\ldots,27a+26,\\
 \text{columns:}\quad&162a+27s,\ldots,162a+27s+26.
 \end{split}
 \label{eq:seven-hadamard-locations}
\end{equation}
These 48 disjoint squares contain all $48\cdot27^2=34992$ nonzero
entries of $T$; every other entry is zero. Each $H_{a,s}$ has entries
in $\{1,\omega,\omega^2\}$ and is therefore a Butson matrix
$\mathrm{BH}(27,3)$, whereas the corresponding coefficient block
$H_{a,s}/\sqrt{162}$ has entry modulus $1/\sqrt{162}$.
After removing row and column phases and permuting rows,
$H_{a,s}/\sqrt{27}$ becomes the unitary Fourier matrix
$F_3^{\otimes3}$. Appendix~\ref{app:seven-party} gives the common
character kernel.

\begin{figure}[htbp]
 \centering
 \includegraphics[width=\linewidth]{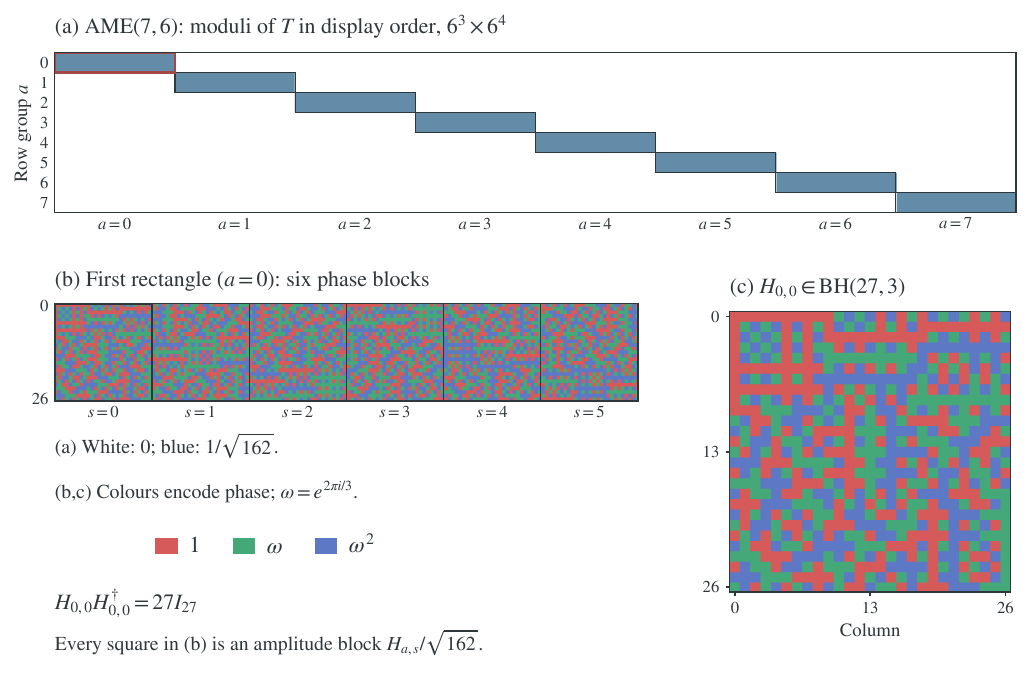}
 \caption{Hadamard structure of the $\AME(7,6)$ state of
 Ref.~\cite{Shi2026}, in the chosen three-versus-four flattening.
 (a) The complete coefficient matrix after display permutations, with
 eight $27\times162$ nonzero rectangles. (b) The rectangle $a=0$,
 divided into its six order-27 blocks. (c) The first square,
 $H_{0,0}$, from rows and columns $0,\ldots,26$.
 In (a), white denotes zero and blue denotes the common nonzero
 modulus $1/\sqrt{162}$. Colours in (b,c) show the phases
 $1,\omega,\omega^2$. The coefficient blocks of $T$ in (a) and (b) are
 $H_{a,s}/\sqrt{162}$, while (c) shows $H_{0,0}$.}
 \label{fig:seven-hadamard}
\end{figure}
\subsection{A flat nine-dice state and its phase hypercube}
\label{sec:latin-hypercube}
The nine-party state $\AME(9,6)$
was recently found by Kitamura \cite{Kitamura2026}. 
It has full support and can be written
\begin{equation}
 |\Psi_9\rangle=6^{-9/2}\sum_{\mathbf v\in\{0,\ldots,5\}^9}
 \zeta_6^{\,f(\mathbf v)}|\mathbf v\rangle,
 \qquad \zeta_6=e^{2\pi i/6},
 \label{eq:nine-phase-state}
\end{equation}
where $f$ is the integer-valued phase function modulo six specified in Eq.~\eqref{eq:ame96-phase}. The phase construction and source version are specified in Appendix~\ref{app:nine-phase}. Here ``flat'' means that all global amplitudes have modulus $6^{-9/2}$, not that they have the same phase.
Group the first four indices into $\mathbf a$ and the remaining five
into $\mathbf b$. Following Eq.~\eqref{eq:odd-officer}, define the
matrix of normalized cell coefficients by
\begin{equation}
 T_{\mathbf a,\mathbf b}
 :=6^2\langle\mathbf a,\mathbf b|\Psi_9\rangle
 =6^{-5/2}\zeta_6^{f(\mathbf a,\mathbf b)}.
 \label{eq:nine-coefficient-matrix}
\end{equation}
Here $f(\mathbf a,\mathbf b)$ is evaluated on the nine concatenated
local labels. The matrix $T$ has $6^4$ rows and $6^5$ columns.
Its rows specify the normalized cell states
\begin{equation}
 |\Phi_{\mathbf a}\rangle
 =\sum_{\mathbf b}T_{\mathbf a,\mathbf b}|\mathbf b\rangle,
 \qquad
 |\Psi_9\rangle=\frac1{6^2}
 \sum_{\mathbf a}|\mathbf a\rangle|\Phi_{\mathbf a}\rangle.
\end{equation}
The AME condition gives $TT^\dagger=I_{6^4}$, so the cell states
are orthonormal. The factor $6^2$ converts global amplitudes into
normalized cell coefficients; it does not identify a ket with
a matrix.

Every one-output probability in Eq.~\eqref{eq:higher-probabilities} is $1/6$. Consequently all $6^4$ cells have the same six-colour probability square. A probability picture alone is therefore uninformative about the phases. Figure~\ref{fig:ame96latin} shows the four-dimensional array through slices and includes actual amplitude phases to keep these two levels of description distinct.

\begin{figure}[htbp]
 \centering
 \includegraphics[width=\linewidth]{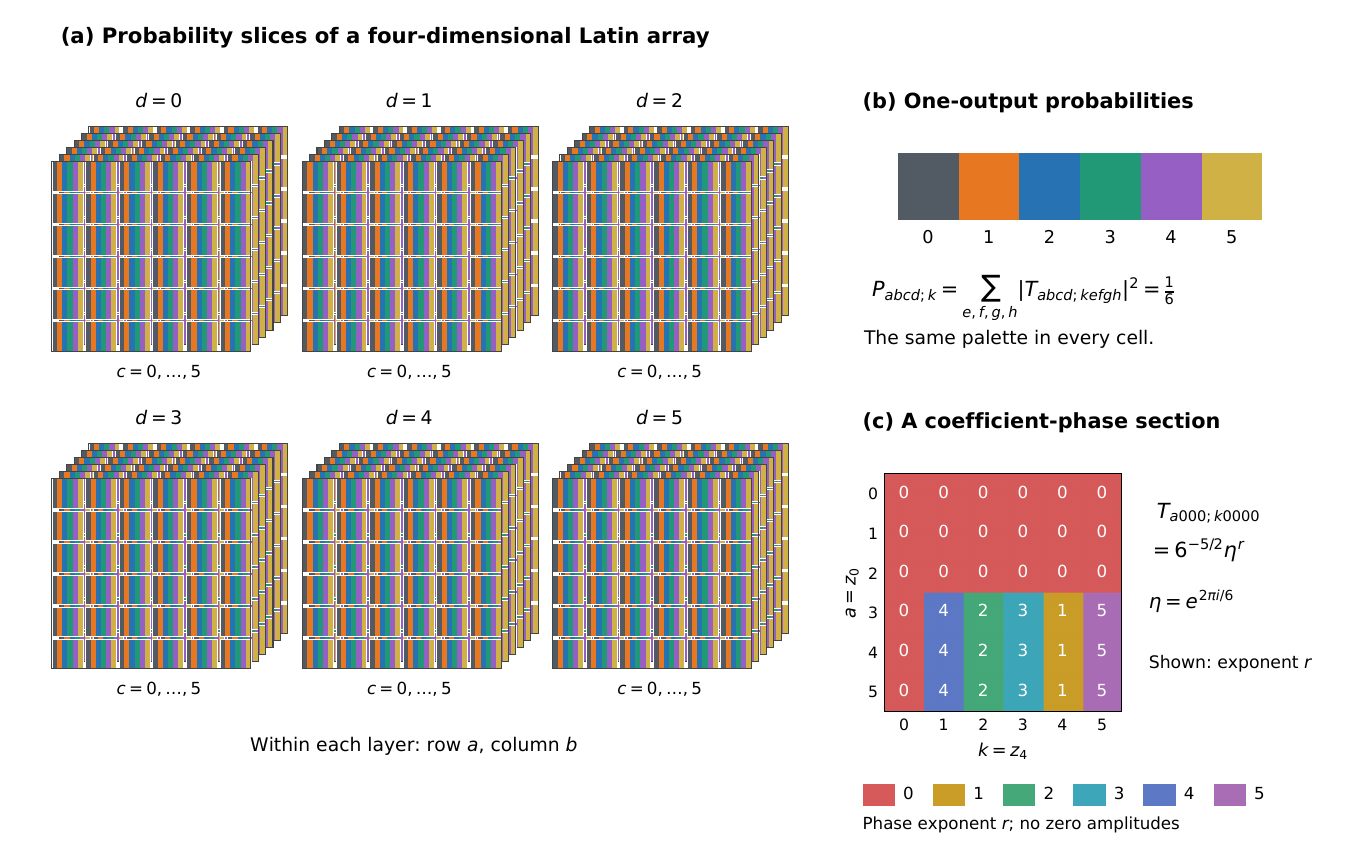}
 \caption{The fractional Latin hypercube associated with Eq.~\eqref{eq:nine-phase-state}, corresponding to $\AME(9,6)$. (a) Four input indices label the cells, displayed as six $d$-slices with $c$-layers. (b) The selected one-output marginal is uniform in every cell, so its six-colour square repeats. (c) A coefficient section: colours and numerals give $r$ in $T_{a000;k0000}=6^{-5/2}\zeta_6^r$. Here all undisplayed indices are fixed to zero, rather than traced out; $r=0$ means phase one, not a vanishing amplitude. Neither a repeated probability cell nor a phase slice identifies the full quantum state.}
 \label{fig:ame96latin}
\end{figure}

\subsection{Hadamard partitions of the amplitude matrix}
There is also a useful square structure inside the rectangular amplitude
matrix. Number the original parties from $0$ to $8$ and denote their local labels by $z_0,\ldots,z_8$. Reorder the output
columns so that $z_7=t$ is fixed first, followed by
$(z_4,z_5,z_6,z_8)$. Then
\begin{equation}
 TP_{\rm col}=\frac1{\sqrt6}
 \begin{bmatrix}V_0&V_1&\cdots&V_5\end{bmatrix},
 \qquad V_tV_t^\dagger=I_{6^4},\quad |(V_t)_{ab}|=6^{-2}.
 \label{eq:ame96-hadamard-blocks}
\end{equation}
Here $P_{\rm col}$ only records a rearrangement of the displayed column
labels. Each $6^2V_t$ is a Butson Hadamard matrix $\mathrm{BH}(6^4,6)$, with
sixth-root entries. In fact all six blocks have the same dephased form.
They are equivalent, under ordinary Hadamard row and column operations,
to the Fourier character table of
$\mathbb Z_2^4\times\mathbb Z_3^4$, or equivalently to
$F_6^{\otimes4}$ after normalization. The short proof is given in Appendix~\ref{app:nine-phase}.

The same array admits a finer partition into Hadamard matrices of order
36. Fix $z_0=a$, $z_1=b$, $z_4=c$, $z_6=d$ and $z_7=t$.
Use $(u,v)=(z_2,z_3)$ for the remaining row labels and
$(w,s)=(z_5,z_8)$ for the column labels. Define
\begin{equation}
 K^{a,b,c,d,t}_{(u,v),(w,s)}
 =6^{5/2}T_{(a,b,u,v),(c,w,d,t,s)},
 \qquad KK^\dagger=36I_{36}.
 \label{eq:ame96-submatrix36}
\end{equation}
All nine labels take values in $\{0,\ldots,5\}$. The factor $6^{5/2}$ removes the common modulus of the cell coefficients in Eq.~\eqref{eq:nine-coefficient-matrix}; equivalently, it multiplies each global amplitude by $6^{9/2}$. Thus $K$ has unimodular entries and $K/6$ is unitary.
For a fixed $t$, the $6^2$ choices of $(a,b)$ and $6^2$ choices of
$(c,d)$ give $6^4$ disjoint squares. Over all six values of $t$ they
give $6^5$ squares and account for every entry of the rectangular
matrix. The character-sum proof in Appendix~\ref{app:nine-phase}
applies to every choice of the five fixed labels.

Their locations can be read directly from the reordered matrix.
The zero-based row and column numbers of an entry of $K^{a,b,c,d,t}$ are
\begin{equation}
 r=216a+36b+6u+v,\qquad
 j=6^4t+216c+36w+6d+s.
 \label{eq:ame96-block-locations}
\end{equation}
Figure~\ref{fig:ame96-hadamard} shows $K^{0,0,0,0,0}$. Its rows are
$0,\ldots,35$, while the selected columns form six separated bands
$B_w$, numbered $36w,\ldots,36w+5$ for $w=0,\ldots,5$.
Retaining these bands from the displayed $36\times216$ rectangle gives
$K=[B_0\ \cdots\ B_5]$. The matching labels identify the same entries
before and after selection.

\begin{figure}[t]
 \centering
 \includegraphics[width=\linewidth]{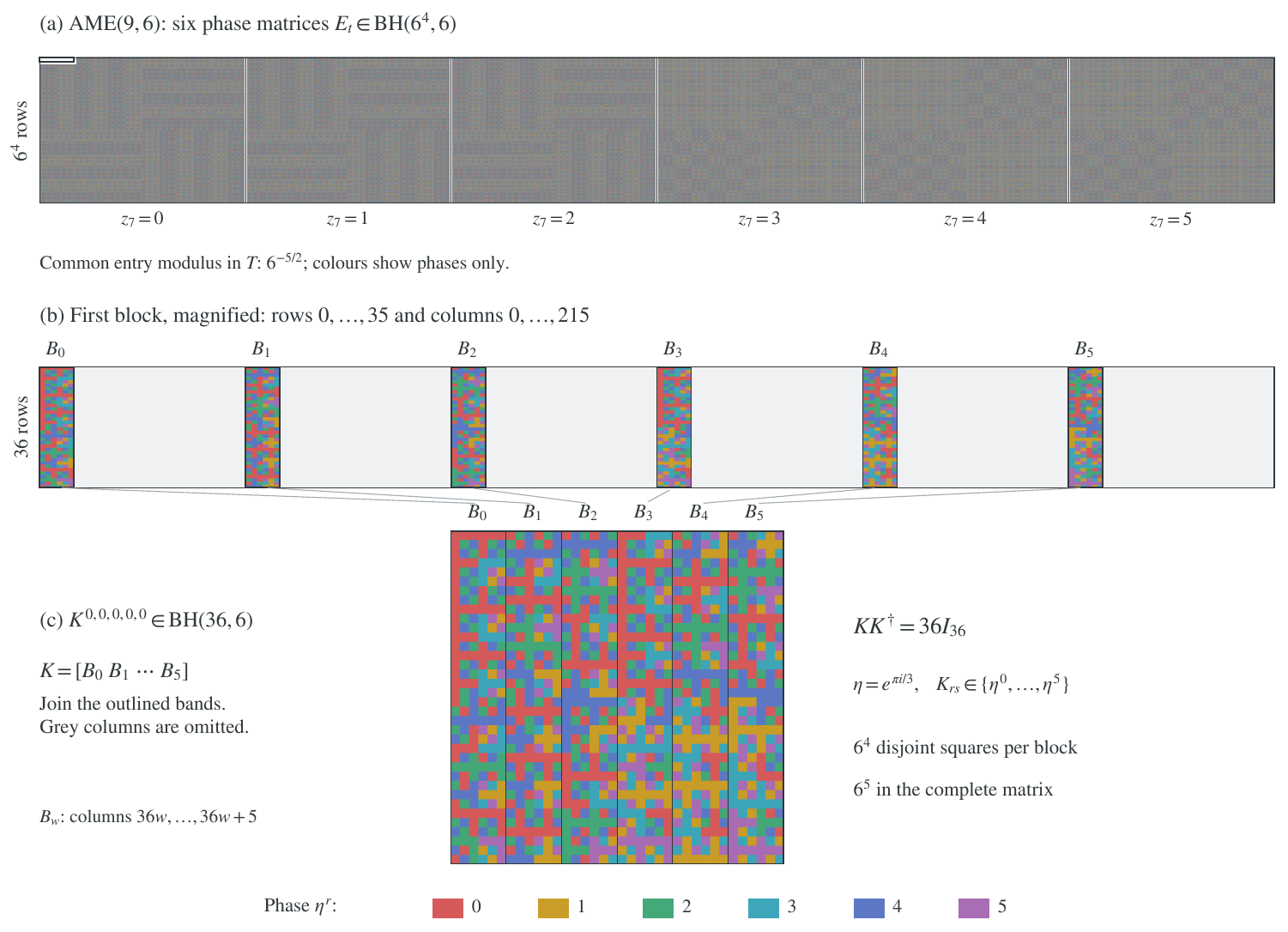}
 \caption{Hadamard submatrices of the $\mathrm{AME}(9,6)$ state of
 Ref.~\cite{Kitamura2026}. Colours denote sixth-root phases, with the
 common amplitude modulus suppressed.
 (a) Grouping columns by $z_7=t$ gives six blocks of order $6^4$.
 The frame marks the rectangle enlarged in (b).
 (b) Within this $36\times216$ rectangle, the outlined bands $B_w$
 contain columns $36w,\ldots,36w+5$; grey denotes omitted columns,
 not zero amplitudes.
 (c) Joining the six bands in order gives $K^{0,0,0,0,0}$ from
 Eq.~\eqref{eq:ame96-submatrix36}. The five fixed labels can be chosen
 independently, yielding a partition into $6^5$ squares of order 36.
 General locations are given in Eq.~\eqref{eq:ame96-block-locations}.}
 \label{fig:ame96-hadamard}
\end{figure}

These counts refer to the specified partitions of the amplitude matrix;
they are not a count of all possible Hadamard submatrices. The row and
column operations used for Hadamard equivalence act on composite
indices and need not be local operations on the individual particles.  A related connection appears in our recent stabilizer
construction~\cite{Hontarenko2026}: for every odd prime power $q$, the $\AME(2q,q)$ tensors admit a representation by normalized $q$-unitary complex Hadamard matrices. A forthcoming revision will also include an extension to stabilizer
$\AME(N,q)$ states for all $2\le N\le 2q$.

\section{Concluding remarks}
For each state considered here, squared amplitudes define fractional Latin arrays, and the figures represent these probabilities by areas. The probabilities alone, however, do not determine the state: relative phases are also needed to obtain the orthogonality conditions.

For the quantum 36 officers problem, a simultaneous cyclic-shift ansatz gives a real solution whose exact sparse representative has 96 terms. The two local bases serve different purposes. One reveals the symmetry that made the search practical; the other displays 36 entangled officers with only two coefficient moduli. Their rational areas provide a direct connection to fractional Latin squares, without requiring the coefficients themselves to be rational.

A further single-party basis change gives a 192-term representative whose colour regions cover the target square at fixed coordinates along every row and column. Its probability array is an explicit mixture of eight Latin squares, whereas the displayed 96-term array lies outside the convex hull of its supported Latin squares. The extra terms suffice for this construction; their necessity in the entire LU orbit is not established. The geometric subdivision is auxiliary and does not add quantum amplitudes.

Comparison with Bruzda's real solution and its one-sign variant distinguishes three entanglement classes. The two 60-term tensors have identical probabilities and support components, yet their invariant values differ. Conversely, a basis change may substantially alter a support picture without changing the class. Neither visible sparsity nor colour balance is a substitute for an LU invariant.

The higher-party examples extend the same distinction. In the seven-party cube, colours alternate between two palettes; in the nine-party hypercube every probability cell is uniform. In the chosen matrix cuts, $48$ Hadamard blocks of order $27$ cover the nonzero support of the seven-party state, whereas six blocks of order $6^4$ cover the nine-party array. The latter also admits a finer partition into $6^5$ squares of order 36, indexed by five fixed local labels. The character form makes their orthogonality transparent, although the conditions from the remaining cuts still enter the AME property.

Several questions remain open. The displayed sparse bases do not establish the minimum support in their local-unitary orbits. Flat complex $\AME(4,6)$ constructions are known~\cite{Rather2024Biunimodular,BruzdaZyczkowski2024,GrossGoedicke2025}; what remains open here is whether the specific LU class of $T^{\mathrm{HZ}}$ contains one. At the probabilistic level, it would be useful to characterize which rational fractional Latin arrays admit coherent lifts and which permit a fixed-position assembly for both output alphabets simultaneously. Proposition~\ref{prop:fixed-position} answers the one-alphabet question. Sign equations and finite-field phases give concrete settings in which to pursue these questions.

{
Let us conclude this work with two more general remarks.

\smallskip
A) Several multipartite states with extremal properties correspond to
quantum designs.
Classical probability vectors can be replaced by quantum
pure states, which can be 
used to construct a quantum superposition or convex mixture.
In an analogous way,
classical designs can be quantized, and their quantum superpositions
and mixtures can then be considered.

\smallskip
B) In the search for new quantum structures, including AME states,
a brute-force search in a space of very large dimension is unlikely to
succeed. We therefore suggest the following steps.

i) Make an educated guess about a condition that the desired state may satisfy,
in order to reduce the dimension of the search space.
In the case of a real $AME(4,6)$ state, this may be the shift symmetry
(\ref{eq:mh-cyclic})
on the 4--torus,
imposed on the coefficients of the tensor.

ii) With suitable computational tools (and some luck) you might be able to find a solution of this simplified problem.

iii) Exploit the gauge freedom of the numerical solution (for instance, local unitary
equivalence for AME states) to bring it to a special, transparent form,
which may make an analytic solution possible.
}

\paragraph{Acknowledgments.}
We are grateful to Wojciech Bruzda for sharing his independent real
construction \cite{WB26}, 
for providing materials, and for many discussions of the
geometric representation of quantum states. His input was essential to the development
of this work. We also thank Suhail Rather for sharing information and unpublished
notes on his independent construction~\cite{Ra26}, and for fruitful correspondence. His work will appear separately.
It is also a pleasure to thank Rafa{\l} Bistro{\'n},
Paolo Facchi, Arul Lakshminarayan,
Grzegorz Rajchel-Mieldzio{\'c}, Albert Rico
and Saverio Pascazio
for numerous discussions on quantum Latin squares and multipartite entanglement
which stimulated this research. We also thank Palash Pandya for 
timely correspondence.
M.H. and K.\,\!\.Z. acknowledge financial support from the European Union through ERC Advanced Grant \emph{TAtypic}, Project No.~101142236. M.H. also acknowledges support from the ``Research support module'' within the ``Excellence Initiative--Research University'' program at the Jagiellonian University in Krak\'ow.

\paragraph{Calculations.}
The appendices specify the 96-term tensor, its local transformations to the cyclic and 192-term forms, the full 192-term officer table and the one-sign comparison. 
The algebraic computations
use integer or exact finite-field arithmetic; the non-stabilizer entropy
of the golden state
was evaluated numerically.

\paragraph{Computational tools.}
AI tools assisted the numerical exploration, algebraic reconstruction, verification and manuscript preparation.
The authors are responsible for the content.

\appendix
\section{Local rotations and the cyclic representative}
\label{app:mh-rotations}
The following matrices act on the four parties in their order in the ket, with local basis order $\ket0,\ldots,\ket5$. They give the transformation in Eq.~\eqref{eq:mh-local-change}. Write $r=\sqrt2$ and take a common factor $1/6$ outside each matrix.

\begingroup
\setlength{\arraycolsep}{3.6pt}
\renewcommand{\arraystretch}{1.13}
\begin{equation}
\begin{gathered}
O_1=\frac16\begin{pmatrix}1&-5&-2&1&1&-2\\
2r&2r&-r&2r&2r&-r\\
3r&0&0&-3r&0&0\\
-1&-1&2&-1&5&2\\
0&0&-3r&0&0&3r\\
2r&-r&2r&2r&-r&2r\end{pmatrix}
\qquad
O_2=\frac16\begin{pmatrix}2r&-r&2r&2r&-r&2r\\
r&2r&-2r&-r&-2r&2r\\
1&3&4&-3&1&0\\
2r&2r&-r&2r&2r&-r\\
-3&1&0&1&3&4\\
-2r&2r&r&2r&-2r&-r\end{pmatrix}
\\[9pt]
O_3=\frac16\begin{pmatrix}3&0&3&-3&0&3\\
0&3r&0&0&-3r&0\\
-3&0&3&3&0&3\\
0&0&3r&0&0&-3r\\
0&3r&0&0&3r&0\\
3r&0&0&3r&0&0\end{pmatrix}
\qquad
O_4=\frac16\begin{pmatrix}r&-2r&-2r&r&-2r&-2r\\
r&2r&-2r&-r&-2r&2r\\
0&1&3&4&-3&1\\
-2r&-r&-2r&2r&r&2r\\
4&-3&1&0&1&3\\
-2r&-2r&r&-2r&-2r&r\end{pmatrix}
\end{gathered}
\label{eq:mh-rotations}
\end{equation}
\endgroup

The sparse table and these four matrices also specify the cyclic representative without an additional table of 216 numbers:
\begin{equation}
 \ket{\psi_{\rm cyc}}=\left(\bigotimes_{a=1}^4 O_a^{\mathsf T}\right)\ket{\psi_{96}},
 \qquad c_{jkl}=6^4\bra{0jkl}\psi_{\rm cyc}\rangle.
 \label{eq:mh-inverse}
\end{equation}
Equivalently, use the tensor $T^{(96)}$ specified by
Table~\ref{tab:mh}:
\begin{equation}
 c_{jkl}=216\sum_{a,b,u,v=0}^5
 (O_1)_{a0}(O_2)_{bj}(O_3)_{uk}(O_4)_{vl}\,
 T^{(96)}_{abuv}.
 \label{eq:mh-orbit-coefficients}
\end{equation}
The factor is $6^4/6=216$, since the global amplitudes are
$T^{(96)}_{abuv}/6$.
The result lies in $\mathbb Z[\sqrt2]$ for all 216 triples. Direct substitution checks $O_aO_a^{\mathsf T}=I_6$, the cyclic identity~\eqref{eq:mh-cyclic} and all three Gram identities. These checks require only integer arithmetic after the displayed denominators are cleared.

\section{The 96- and 192-term representatives of the real
\texorpdfstring{$\AME(4,6)$}{AME(4,6)} state}
\label{app:mh}

The two coefficient arrays below describe the same state in different
local bases. We first specify the sparse 96-term representative,
then give a single-party orthogonal transformation to a 192-term
representative whose colour regions admit fixed-position sums.

\subsection{Complete coefficients of the 96-term representative}

Table~\ref{tab:mh} specifies the state in Eq.~\eqref{eq:mh}
without a numerical approximation. The first two columns fix $(i,j)$;
each subsequent pair gives $(k,l)$ and the scaled coefficient
$2T^{(96)}_{ijkl}$. Divide a listed number by two to obtain an officer
coefficient, or by twelve to obtain a global state amplitude.
For example, the first entry gives $2T^{(96)}_{0012}=\sqrt2$,
hence $T^{(96)}_{0012}=1/\sqrt2$ and
$\langle0012|\psi_{96}\rangle=\sqrt2/12$.
The index pairs label two six-level systems; they are not decimal integers.

\begin{table}[htbp]
\centering
\caption{The 96 nonzero entries of $2T^{(96)}$ for the real
$\AME(4,6)$ state $\ket{\psi_{96}}$. Each listed number is twice
an officer coefficient and twelve times the corresponding amplitude
of the normalized four-party state. Blank slots, marked by a dash,
contain no further nonzero entries; all coefficients not listed are zero.}
\label{tab:mh}
\begingroup
\normalsize
\renewcommand{\arraystretch}{1.08}
\setlength{\tabcolsep}{4pt}
\begin{tabular*}{\textwidth}{@{\extracolsep{\fill}}cc cr cr cr cr@{}}
\toprule
\multicolumn{2}{c}{Row indices} & \multicolumn{8}{c}{Nonzero column indices and coefficients}\\
\cmidrule(lr){1-2}\cmidrule(lr){3-10}
$i$ & $j$ & $(k,l)$ & $2T^{(96)}_{ijkl}$ & $(k,l)$ & $2T^{(96)}_{ijkl}$ & $(k,l)$ & $2T^{(96)}_{ijkl}$ & $(k,l)$ & $2T^{(96)}_{ijkl}$\\
\midrule
$0$ & $0$ & $(1,2)$ & $+\sqrt2$ & $(2,5)$ & $-\sqrt2$ & \multicolumn{2}{c}{---} & \multicolumn{2}{c}{---}\\
$0$ & $1$ & $(2,0)$ & $+\sqrt2$ & $(3,4)$ & $-\sqrt2$ & \multicolumn{2}{c}{---} & \multicolumn{2}{c}{---}\\
$0$ & $2$ & $(1,1)$ & $-1$ & $(3,0)$ & $+1$ & $(4,3)$ & $-1$ & $(5,5)$ & $+1$\\
$0$ & $3$ & $(0,3)$ & $-\sqrt2$ & $(4,2)$ & $-\sqrt2$ & \multicolumn{2}{c}{---} & \multicolumn{2}{c}{---}\\
$0$ & $4$ & $(1,5)$ & $+1$ & $(3,3)$ & $-1$ & $(4,0)$ & $+1$ & $(5,1)$ & $-1$\\
$0$ & $5$ & $(0,1)$ & $+\sqrt2$ & $(5,4)$ & $+\sqrt2$ & \multicolumn{2}{c}{---} & \multicolumn{2}{c}{---}\\
\midrule
$1$ & $0$ & $(3,1)$ & $+\sqrt2$ & $(5,0)$ & $-\sqrt2$ & \multicolumn{2}{c}{---} & \multicolumn{2}{c}{---}\\
$1$ & $1$ & $(0,4)$ & $-1$ & $(1,5)$ & $-1$ & $(2,2)$ & $+1$ & $(5,1)$ & $-1$\\
$1$ & $2$ & $(2,3)$ & $-\sqrt2$ & $(4,4)$ & $-\sqrt2$ & \multicolumn{2}{c}{---} & \multicolumn{2}{c}{---}\\
$1$ & $3$ & $(0,4)$ & $+1$ & $(1,1)$ & $-1$ & $(2,2)$ & $+1$ & $(5,5)$ & $-1$\\
$1$ & $4$ & $(0,3)$ & $+\sqrt2$ & $(4,2)$ & $-\sqrt2$ & \multicolumn{2}{c}{---} & \multicolumn{2}{c}{---}\\
$1$ & $5$ & $(1,0)$ & $+\sqrt2$ & $(3,5)$ & $+\sqrt2$ & \multicolumn{2}{c}{---} & \multicolumn{2}{c}{---}\\
\midrule
$2$ & $0$ & $(4,1)$ & $-\sqrt2$ & $(5,3)$ & $-\sqrt2$ & \multicolumn{2}{c}{---} & \multicolumn{2}{c}{---}\\
$2$ & $1$ & $(0,4)$ & $+1$ & $(1,1)$ & $+1$ & $(2,2)$ & $+1$ & $(5,5)$ & $+1$\\
$2$ & $2$ & $(0,0)$ & $+\sqrt2$ & $(3,2)$ & $-\sqrt2$ & \multicolumn{2}{c}{---} & \multicolumn{2}{c}{---}\\
$2$ & $3$ & $(0,4)$ & $+1$ & $(1,5)$ & $-1$ & $(2,2)$ & $-1$ & $(5,1)$ & $-1$\\
$2$ & $4$ & $(2,0)$ & $-\sqrt2$ & $(3,4)$ & $-\sqrt2$ & \multicolumn{2}{c}{---} & \multicolumn{2}{c}{---}\\
$2$ & $5$ & $(1,3)$ & $-\sqrt2$ & $(4,5)$ & $+\sqrt2$ & \multicolumn{2}{c}{---} & \multicolumn{2}{c}{---}\\
\midrule
$3$ & $0$ & $(0,5)$ & $+\sqrt2$ & $(1,4)$ & $+\sqrt2$ & \multicolumn{2}{c}{---} & \multicolumn{2}{c}{---}\\
$3$ & $1$ & $(0,0)$ & $+\sqrt2$ & $(3,2)$ & $+\sqrt2$ & \multicolumn{2}{c}{---} & \multicolumn{2}{c}{---}\\
$3$ & $2$ & $(1,5)$ & $+1$ & $(3,3)$ & $+1$ & $(4,0)$ & $-1$ & $(5,1)$ & $-1$\\
$3$ & $3$ & $(2,3)$ & $-\sqrt2$ & $(4,4)$ & $+\sqrt2$ & \multicolumn{2}{c}{---} & \multicolumn{2}{c}{---}\\
$3$ & $4$ & $(1,1)$ & $+1$ & $(3,0)$ & $+1$ & $(4,3)$ & $-1$ & $(5,5)$ & $-1$\\
$3$ & $5$ & $(2,1)$ & $-\sqrt2$ & $(5,2)$ & $+\sqrt2$ & \multicolumn{2}{c}{---} & \multicolumn{2}{c}{---}\\
\midrule
$4$ & $0$ & $(0,2)$ & $+1$ & $(2,4)$ & $-1$ & $(3,0)$ & $+1$ & $(4,3)$ & $+1$\\
$4$ & $1$ & $(1,3)$ & $+\sqrt2$ & $(4,5)$ & $+\sqrt2$ & \multicolumn{2}{c}{---} & \multicolumn{2}{c}{---}\\
$4$ & $2$ & $(0,1)$ & $+\sqrt2$ & $(5,4)$ & $-\sqrt2$ & \multicolumn{2}{c}{---} & \multicolumn{2}{c}{---}\\
$4$ & $3$ & $(1,0)$ & $+\sqrt2$ & $(3,5)$ & $-\sqrt2$ & \multicolumn{2}{c}{---} & \multicolumn{2}{c}{---}\\
$4$ & $4$ & $(2,1)$ & $+\sqrt2$ & $(5,2)$ & $+\sqrt2$ & \multicolumn{2}{c}{---} & \multicolumn{2}{c}{---}\\
$4$ & $5$ & $(0,2)$ & $+1$ & $(2,4)$ & $+1$ & $(3,3)$ & $+1$ & $(4,0)$ & $+1$\\
\midrule
$5$ & $0$ & $(0,2)$ & $+1$ & $(2,4)$ & $+1$ & $(3,3)$ & $-1$ & $(4,0)$ & $-1$\\
$5$ & $1$ & $(4,1)$ & $+\sqrt2$ & $(5,3)$ & $-\sqrt2$ & \multicolumn{2}{c}{---} & \multicolumn{2}{c}{---}\\
$5$ & $2$ & $(1,2)$ & $-\sqrt2$ & $(2,5)$ & $-\sqrt2$ & \multicolumn{2}{c}{---} & \multicolumn{2}{c}{---}\\
$5$ & $3$ & $(3,1)$ & $+\sqrt2$ & $(5,0)$ & $+\sqrt2$ & \multicolumn{2}{c}{---} & \multicolumn{2}{c}{---}\\
$5$ & $4$ & $(0,5)$ & $-\sqrt2$ & $(1,4)$ & $+\sqrt2$ & \multicolumn{2}{c}{---} & \multicolumn{2}{c}{---}\\
$5$ & $5$ & $(0,2)$ & $-1$ & $(2,4)$ & $+1$ & $(3,0)$ & $+1$ & $(4,3)$ & $+1$\\
\bottomrule
\end{tabular*}
\endgroup
\end{table}

For an exact check, flatten the listed array $2T^{(96)}$ in each balanced cut and separate its coefficients of $1$ and $\sqrt2$ into integer matrices $A,B$. Their products satisfy Eq.~\eqref{eq:integer-gram}, verifying the normalization and the maximally mixed two-party marginals directly from the table.

\subsection{A local basis with fixed-position colour sums}
\label{app:fixed-colour}
The 96-term picture in Fig.~\ref{fig:mh-officer-areas} is deliberately
kept in the main text: its halves and quarters display the officers with
little geometric overhead. A different question is whether the pieces
of each colour can cover a square while retaining their original
positions inside the cells. The following local change of basis gives
such a representation of the same AME state.

\subsubsection{The orthogonal transformation and all coefficients}
With local basis order $0,\ldots,5$, set
\begin{equation}
 O=\frac1{\sqrt2}\begin{pmatrix}
 1&0&1&0&0&0\\
 0&1&0&0&1&0\\
 1&0&-1&0&0&0\\
 0&0&0&1&0&1\\
 0&1&0&0&-1&0\\
 0&0&0&1&0&-1
 \end{pmatrix},\qquad OO^{\mathsf T}=I_6,
 \label{eq:mh192-rotation}
\end{equation}
and define
\begin{equation}
 \ket{\psi_{192}}=(I\otimes I\otimes O\otimes I)\ket{\psi_{96}},
 \qquad T'_{ijul}=\sum_{k=0}^{5}O_{uk}T^{(96)}_{ijkl}.
 \label{eq:mh192-state}
\end{equation}
The matrix $O$ acts as the normalized real Hadamard matrix
$H_2=2^{-1/2}\bigl(\begin{smallmatrix}1&1\\1&-1\end{smallmatrix}\bigr)$
on the ordered pairs $(0,2)$, $(1,4)$ and $(3,5)$.
It acts on one physical party, unlike a permutation of a composite
row or column index. Thus Eq.~\eqref{eq:mh192-state} preserves the LU
class and every AME marginal. It does not put the whole state into a
flat Hadamard form.

The normalized officers $\ket{\Phi'_{ij}}=\sum_{k,l}T'_{ijkl}\ket{kl}$
are listed in Table~\ref{tab:mh192-officers}. Twenty-four cells have
four coefficients of modulus $1/2$; twelve have eight coefficients of
modulus $1/(2\sqrt2)$. In the global state the extra factor $1/6$ gives
96 amplitudes of modulus $1/12$ and 96 of modulus $1/(12\sqrt2)$.
For example,
\begin{equation}
 \ket{\Phi'_{00}}=\frac12\bigl(-\ket{0\mathrm F}
 +\ket{1\mathrm C}+\ket{2\mathrm F}+\ket{4\mathrm C}\bigr).
 \label{eq:mh192-cell00}
\end{equation}
More product-basis terms do not imply a larger Schmidt rank: the local
transformation leaves each cell's original rank, two or four, unchanged.
Combining Eq.~\eqref{eq:mh192-state} with Appendix~\ref{app:mh-rotations}
also gives the full path from the dense cyclic representative, with
third-party matrix $OO_3$.

To express the calculation in integer arithmetic, write $2T^{(96)}=a+b\sqrt2$, with integer coefficient arrays $a,b$, and set $R=\sqrt2 O$. Then
\begin{equation}
 T'=\frac{A+B\sqrt2}{4},\quad
 A_{ijul}=2\sum_kR_{uk}b_{ijkl},\quad
 B_{ijul}=\sum_kR_{uk}a_{ijkl}.
 \label{eq:mh192-integer}
\end{equation}
In each of the three square flattenings, exact integer multiplication gives
$AA^{\mathsf T}+2BB^{\mathsf T}=16I_{36}$ and
$AB^{\mathsf T}+BA^{\mathsf T}=0$.

\subsubsection{What is added in the colour picture}
Let $Q$ be a unit square and let $E_{ijkl}\subset Q$ be the region
assigned to the coefficient $T'_{ijkl}$ in cell $(i,j)$.
The regions partition each cell and have area $|T'_{ijkl}|^2$.
For a fixed colour $k$, their unions over $l$ partition $Q$ along
each row and each column, up to boundaries of zero area. No region is
moved, rotated or reflected when these unions are taken.
This is an addition of indicator functions and areas, not an addition
of quantum amplitudes.

The construction uses eight vertical strips of area $1/8$ at the same
positions in every cell. At position $r$, the colours form a Latin
square $L^{(r)}$. Consequently
\begin{equation}
 P'_{ijk}:=\sum_l|T'_{ijkl}|^2
 =\frac18\sum_{r=0}^{7}\delta_{k,L^{(r)}_{ij}}.
 \label{eq:mh192-latin-mixture}
\end{equation}
A weight $1/8$
occupies one strip; a weight $1/4$ occupies two. Neighbouring strips of
the same term are merged in Fig.~\ref{fig:mh192-fixed}. When its strips
are separated, matching circled tags identify them within that cell,
and its letter is printed only once. There are 192 amplitude labels;
67 terms have disconnected regions. In cell $(0,0)$, by contrast, the
four terms of Eq.~\eqref{eq:mh192-cell00} occupy four contiguous quarters.

\begin{figure}[p]
 \centering
 \includegraphics[width=\linewidth]{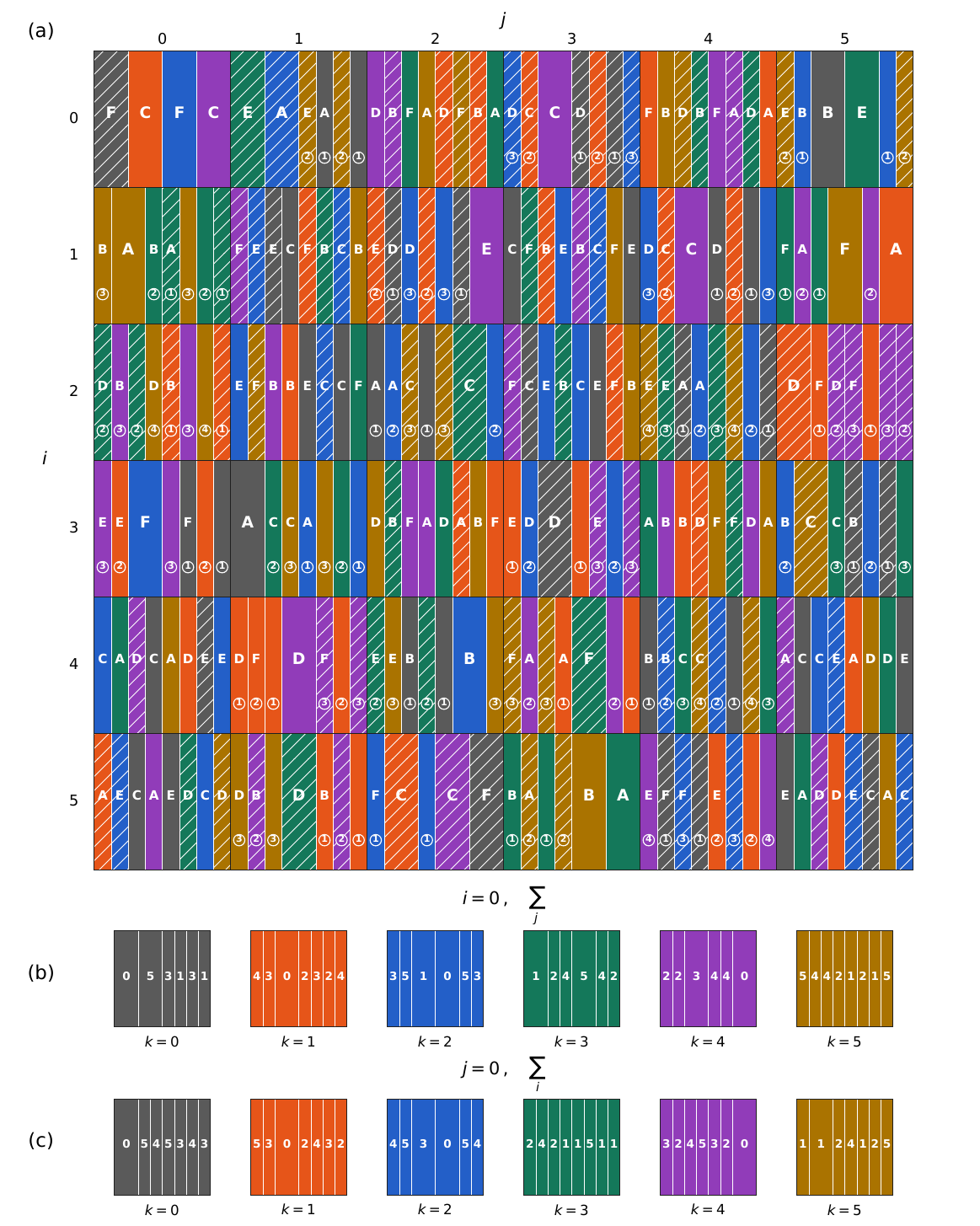}
 \caption{Fixed-position colour sums in the 192-term local basis.
 (a) The 36 normalized officers: colour labels $k$, letters
 $\mathrm A,\ldots,\mathrm F$ label $l=0,\ldots,5$, and hatching
 denotes a negative coefficient. Each term occupies total area
 $|T'_{ijkl}|^2=1/4$ or $1/8$. Matching circled numbers connect
 separated parts of the same term within one cell; its letter is
 printed only once. These tags are not additional quantum labels.
 (b) Colour regions from row $i=0$, overlaid at their original
 in-cell coordinates; ordinary digits identify the source column $j$.
 (c) The same operation for column $j=0$, with digits identifying
 the source row $i$. Each colour covers a unit square exactly once,
 without translating, rotating or reflecting any region. All six
 rows and six columns yield 72 such squares. The sums are of region
 indicators, rather than quantum amplitudes. The analogous simultaneous
 condition for the letter labels does not hold for this assignment.}
 \label{fig:mh192-fixed}
\end{figure}

\subsubsection{Why the 96-term representation is different}
By Proposition~\ref{prop:fixed-position}, a fixed-position
layout exists exactly when the colour array is a mixture of Latin
squares. A zero entry forbids the corresponding symbol in every
square with nonzero weight, so only Latin squares supported by the
array need be considered. Subdividing a colour region to carry the
second output label does not change this necessary colour condition.

For the 96-term tensor, complete exact enumeration gives eight Latin
squares compatible with the support of $P$ in
Eq.~\eqref{eq:mh-two-fractional}. Every one has
$L_{0,2}\in\{1,5\}$. A convex combination of them would therefore give
\begin{equation}
 \begin{aligned}
 &P_{0,2,1}+P_{0,2,5}=1
 &&\text{for every such convex combination},\\
 &P_{0,2,1}+P_{0,2,5}=\tfrac14+\tfrac14=\tfrac12
 &&\text{for the actual array}.
 \end{aligned}
 \label{eq:mh96-fixed-obstruction}
\end{equation}
This contradiction rules out a fixed-position layout for these cells and this colour index, even with arbitrary measurable subdivisions. It does not rule out reassembly by translations or rotations,
nor a different local basis. In particular, neither optimal support 192
nor the impossibility of every 96-term representative in the LU orbit
is asserted. Equally, we do not know whether disconnected regions are
necessary in every layout of the new basis. The geometry records a
choice of probability representation; the signed tensor remains the
object whose AME property is verified.

\begin{landscape}
\begin{table}[p]
\centering
\caption{The 36 officer cells of the 192-term representative of the real $\AME(4,6)$ state.
Row $i$, column $j$ gives $\ket{\Phi'_{ij}}$;
$\ket{\psi_{192}}=\frac16\sum_{i,j}\ket{ij}\ket{\Phi'_{ij}}$.
Digits and colours label the first output; letters
$\mathrm A,\ldots,\mathrm F$ label the second output $0,\ldots,5$.
Each ket is printed once, including its sign and coefficient. Geometric
subdivision of its probability does not duplicate it in this table.}
\label{tab:mh192-officers}
\resizebox{\linewidth}{!}{
\definecolor{c0}{HTML}{5A5A5A}
\definecolor{c1}{HTML}{E65519}
\definecolor{c2}{HTML}{235FC8}
\definecolor{c3}{HTML}{14785A}
\definecolor{c4}{HTML}{913CB9}
\definecolor{c5}{HTML}{AA7300}
\newcommand{\offket}[2]{{\color{c#1}|#1\mathrm{#2}\rangle}}
\begingroup
\setlength{\tabcolsep}{2.2pt}
\renewcommand{\arraystretch}{1}
\setlength{\jot}{0.5pt}
\fontsize{10}{11.5}\selectfont
\begin{tabular}{c|*{6}{c|}}
\multicolumn{1}{c}{}&\multicolumn{1}{c}{$j=0$}&\multicolumn{1}{c}{$j=1$}&\multicolumn{1}{c}{$j=2$}&\multicolumn{1}{c}{$j=3$}&\multicolumn{1}{c}{$j=4$}&\multicolumn{1}{c}{$j=5$}\\[2pt]\hline
\rule[-9.6mm]{0pt}{20mm}0&\makebox[41mm][c]{$\frac{1}{2}\left(\begin{gathered}-\offket{0}{F}+\offket{1}{C}\\+\offket{2}{F}+\offket{4}{C}\end{gathered}\right)$}&\makebox[41mm][c]{$\frac{1}{2}\left(\begin{gathered}\offket{0}{A}-\offket{2}{A}\\-\offket{3}{E}-\offket{5}{E}\end{gathered}\right)$}&\makebox[41mm][c]{$\frac{1}{2\sqrt{2}}\left(\begin{gathered}-\offket{1}{B}-\offket{1}{D}\\+\offket{3}{A}+\offket{3}{F}\\-\offket{4}{B}+\offket{4}{D}\\+\offket{5}{A}-\offket{5}{F}\end{gathered}\right)$}&\makebox[41mm][c]{$\frac{1}{2}\left(\begin{gathered}-\offket{0}{D}-\offket{1}{C}\\-\offket{2}{D}+\offket{4}{C}\end{gathered}\right)$}&\makebox[41mm][c]{$\frac{1}{2\sqrt{2}}\left(\begin{gathered}\offket{1}{A}+\offket{1}{F}\\-\offket{3}{B}-\offket{3}{D}\\-\offket{4}{A}+\offket{4}{F}\\+\offket{5}{B}-\offket{5}{D}\end{gathered}\right)$}&\makebox[41mm][c]{$\frac{1}{2}\left(\begin{gathered}\offket{0}{B}+\offket{2}{B}\\+\offket{3}{E}-\offket{5}{E}\end{gathered}\right)$}\\\hline
\rule[-9.6mm]{0pt}{20mm}1&\makebox[41mm][c]{$\frac{1}{2}\left(\begin{gathered}-\offket{3}{A}+\offket{3}{B}\\+\offket{5}{A}+\offket{5}{B}\end{gathered}\right)$}&\makebox[41mm][c]{$\frac{1}{2\sqrt{2}}\left(\begin{gathered}\offket{0}{C}-\offket{0}{E}\\-\offket{1}{F}-\offket{2}{C}\\-\offket{2}{E}-\offket{3}{B}\\-\offket{4}{F}+\offket{5}{B}\end{gathered}\right)$}&\makebox[41mm][c]{$\frac{1}{2}\left(\begin{gathered}-\offket{0}{D}-\offket{1}{E}\\+\offket{2}{D}+\offket{4}{E}\end{gathered}\right)$}&\makebox[41mm][c]{$\frac{1}{2\sqrt{2}}\left(\begin{gathered}\offket{0}{C}+\offket{0}{E}\\-\offket{1}{B}-\offket{2}{C}\\+\offket{2}{E}-\offket{3}{F}\\-\offket{4}{B}+\offket{5}{F}\end{gathered}\right)$}&\makebox[41mm][c]{$\frac{1}{2}\left(\begin{gathered}\offket{0}{D}-\offket{1}{C}\\+\offket{2}{D}+\offket{4}{C}\end{gathered}\right)$}&\makebox[41mm][c]{$\frac{1}{2}\left(\begin{gathered}\offket{1}{A}+\offket{3}{F}\\+\offket{4}{A}+\offket{5}{F}\end{gathered}\right)$}\\\hline
\rule[-9.6mm]{0pt}{20mm}2&\makebox[41mm][c]{$\frac{1}{2}\left(\begin{gathered}-\offket{1}{B}-\offket{3}{D}\\+\offket{4}{B}+\offket{5}{D}\end{gathered}\right)$}&\makebox[41mm][c]{$\frac{1}{2\sqrt{2}}\left(\begin{gathered}\offket{0}{C}+\offket{0}{E}\\+\offket{1}{B}-\offket{2}{C}\\+\offket{2}{E}+\offket{3}{F}\\+\offket{4}{B}-\offket{5}{F}\end{gathered}\right)$}&\makebox[41mm][c]{$\frac{1}{2}\left(\begin{gathered}\offket{0}{A}+\offket{2}{A}\\-\offket{3}{C}-\offket{5}{C}\end{gathered}\right)$}&\makebox[41mm][c]{$\frac{1}{2\sqrt{2}}\left(\begin{gathered}-\offket{0}{C}+\offket{0}{E}\\-\offket{1}{F}+\offket{2}{C}\\+\offket{2}{E}-\offket{3}{B}\\-\offket{4}{F}+\offket{5}{B}\end{gathered}\right)$}&\makebox[41mm][c]{$\frac{1}{2}\left(\begin{gathered}-\offket{0}{A}+\offket{2}{A}\\-\offket{3}{E}-\offket{5}{E}\end{gathered}\right)$}&\makebox[41mm][c]{$\frac{1}{2}\left(\begin{gathered}-\offket{1}{D}+\offket{1}{F}\\-\offket{4}{D}-\offket{4}{F}\end{gathered}\right)$}\\\hline
\rule[-9.6mm]{0pt}{20mm}3&\makebox[41mm][c]{$\frac{1}{2}\left(\begin{gathered}\offket{0}{F}+\offket{1}{E}\\+\offket{2}{F}+\offket{4}{E}\end{gathered}\right)$}&\makebox[41mm][c]{$\frac{1}{2}\left(\begin{gathered}\offket{0}{A}+\offket{2}{A}\\+\offket{3}{C}+\offket{5}{C}\end{gathered}\right)$}&\makebox[41mm][c]{$\frac{1}{2\sqrt{2}}\left(\begin{gathered}-\offket{1}{A}+\offket{1}{F}\\-\offket{3}{B}+\offket{3}{D}\\+\offket{4}{A}+\offket{4}{F}\\+\offket{5}{B}+\offket{5}{D}\end{gathered}\right)$}&\makebox[41mm][c]{$\frac{1}{2}\left(\begin{gathered}-\offket{0}{D}+\offket{1}{E}\\+\offket{2}{D}-\offket{4}{E}\end{gathered}\right)$}&\makebox[41mm][c]{$\frac{1}{2\sqrt{2}}\left(\begin{gathered}\offket{1}{B}-\offket{1}{D}\\+\offket{3}{A}-\offket{3}{F}\\+\offket{4}{B}+\offket{4}{D}\\+\offket{5}{A}+\offket{5}{F}\end{gathered}\right)$}&\makebox[41mm][c]{$\frac{1}{2}\left(\begin{gathered}-\offket{0}{B}+\offket{2}{B}\\+\offket{3}{C}-\offket{5}{C}\end{gathered}\right)$}\\\hline
\rule[-9.6mm]{0pt}{20mm}4&\makebox[41mm][c]{$\frac{1}{2\sqrt{2}}\left(\begin{gathered}\offket{0}{C}-\offket{0}{E}\\+\offket{1}{D}+\offket{2}{C}\\+\offket{2}{E}+\offket{3}{A}\\-\offket{4}{D}+\offket{5}{A}\end{gathered}\right)$}&\makebox[41mm][c]{$\frac{1}{2}\left(\begin{gathered}\offket{1}{D}+\offket{1}{F}\\+\offket{4}{D}-\offket{4}{F}\end{gathered}\right)$}&\makebox[41mm][c]{$\frac{1}{2}\left(\begin{gathered}\offket{0}{B}+\offket{2}{B}\\-\offket{3}{E}+\offket{5}{E}\end{gathered}\right)$}&\makebox[41mm][c]{$\frac{1}{2}\left(\begin{gathered}\offket{1}{A}-\offket{3}{F}\\+\offket{4}{A}-\offket{5}{F}\end{gathered}\right)$}&\makebox[41mm][c]{$\frac{1}{2}\left(\begin{gathered}\offket{0}{B}-\offket{2}{B}\\+\offket{3}{C}-\offket{5}{C}\end{gathered}\right)$}&\makebox[41mm][c]{$\frac{1}{2\sqrt{2}}\left(\begin{gathered}\offket{0}{C}+\offket{0}{E}\\+\offket{1}{A}+\offket{2}{C}\\-\offket{2}{E}+\offket{3}{D}\\-\offket{4}{A}+\offket{5}{D}\end{gathered}\right)$}\\\hline
\rule[-9.6mm]{0pt}{20mm}5&\makebox[41mm][c]{$\frac{1}{2\sqrt{2}}\left(\begin{gathered}\offket{0}{C}+\offket{0}{E}\\-\offket{1}{A}+\offket{2}{C}\\-\offket{2}{E}-\offket{3}{D}\\+\offket{4}{A}-\offket{5}{D}\end{gathered}\right)$}&\makebox[41mm][c]{$\frac{1}{2}\left(\begin{gathered}\offket{1}{B}-\offket{3}{D}\\-\offket{4}{B}+\offket{5}{D}\end{gathered}\right)$}&\makebox[41mm][c]{$\frac{1}{2}\left(\begin{gathered}-\offket{0}{F}-\offket{1}{C}\\+\offket{2}{F}-\offket{4}{C}\end{gathered}\right)$}&\makebox[41mm][c]{$\frac{1}{2}\left(\begin{gathered}\offket{3}{A}+\offket{3}{B}\\-\offket{5}{A}+\offket{5}{B}\end{gathered}\right)$}&\makebox[41mm][c]{$\frac{1}{2}\left(\begin{gathered}-\offket{0}{F}+\offket{1}{E}\\-\offket{2}{F}+\offket{4}{E}\end{gathered}\right)$}&\makebox[41mm][c]{$\frac{1}{2\sqrt{2}}\left(\begin{gathered}-\offket{0}{C}+\offket{0}{E}\\+\offket{1}{D}-\offket{2}{C}\\-\offket{2}{E}+\offket{3}{A}\\-\offket{4}{D}+\offket{5}{A}\end{gathered}\right)$}\\\hline
\end{tabular}
\endgroup
}
\end{table}
\end{landscape}

\section{Integer identities and the one-sign construction}
\label{app:sign-witnesses}

The seed of Bruzda A is taken from Bruzda's author-supplied draft~\cite{WB26}. Reconstruction from the signed Pauli table
in Eq.~(2) of that draft and the odd permutations of $(0,1,2,3)$
gives the same 60 coefficients. Indices are zero-based and ordered
$(i,j,k,l)$, with composite row and column indices $6i+j$ and $6k+l$
for the first flattening. Each nonzero coefficient can be written
using an integer $c$ and $m\in\{0,1\}$ as
\begin{equation}
 \psi_{ijkl}=c(\sqrt2)^m/24.
\end{equation}
Equivalently, write $T=6\psi=(A+\sqrt2 B)/4$, where $A$ and $B$ have
integer entries and disjoint supports. For each of the three balanced
flattenings, orthogonality is checked by the two integer identities
\begin{equation}
 AA^{\mathsf T}+2BB^{\mathsf T}=16I_{36},\qquad
 AB^{\mathsf T}+BA^{\mathsf T}=0.
 \label{eq:integer-Gram-checks}
\end{equation}
The three tensors satisfy these identities in all three cuts. Consequently $\psi=T/6$ is normalized and
every two-party reduced density matrix is $I_{36}/36$. The third tensor
is obtained from Bruzda's tensor by the single sign flip
$\psi_{0132}:1/6\mapsto-1/6$; no other coefficient is changed.

For completeness, the general fixed-support sign test is elementary.
In each flattening of the 60-term support, distinct rows meet in either
zero or two columns. When they meet in columns $c,d$, changing the signs
by bits $\epsilon$ preserves their cancellation exactly when
\begin{equation}
 \epsilon_{rc}+\epsilon_{sc}+\epsilon_{rd}+\epsilon_{sd}=0
 \pmod2.
 \label{eq:sign-parity}
\end{equation}
The equations from all three cuts have rank 28 over $\mathbb F_2$ on
60 variables, leaving a 32-dimensional sign space. Local diagonal sign
changes span an 18-dimensional subspace. These counts describe only
this fixed-support sign problem; they are not counts of LU classes.
The explicit flip used in the main text satisfies every equation.

The invariant calculation also uses integer arithmetic. Enumerating the
four nonzero ket entries fixes all four bra entries in
Eq.~\eqref{eq:klein}. Products are accumulated in $\mathbb Z[\sqrt2]$
and divided by $4^8$. The coefficient of $\sqrt2$ vanishes, leaving
the exact values 108, 72 and 40. If the entry at $0132$ carries a phase
$z$, the same calculation retains its exponent, with opposite sign
for each bra occurrence. For Bruzda's tensor the resulting Laurent
polynomial is exactly
\begin{equation}
 K(T^{\rm WB}(z))=56+8z+8z^{-1},\qquad |z|=1.
\end{equation}
Orthogonality holds for every $|z|=1$, because the marked entry is isolated in its row and column in every flattening.

To see why $K$ is also invariant under party permutations, label the four
copies by $V=\mathbb F_2^2$. The four matchings are
$K_p(r)=r+p$, one for each party label $p\in V$. Every permutation of
these four labels has the affine form $p\mapsto Lp+t$, because
$\operatorname{AGL}(2,2)$ has 24 elements. Relabelling the two sets of
copies independently by $f(r)=Lr+t$ and $g(r)=L^{-1}r$ gives
 $fK_pg=K_{Lp+t}$.
Relabelling dummy indices leaves the contraction unchanged. Thus distinct
values exclude LU equivalence even after arbitrary party permutations.
Finally, the support-component orderings in Fig.~\ref{fig:three-real-blocks} are obtained by an exact graph traversal.
These row and column permutations leave every off-block entry
identically zero. They do not modify the tensor coefficients or claim a
further physical transformation of the state.

For the three real states in Table~\ref{tab:three-real-classes}, direct summation of the Weyl fourth moments in the convention of Eq.~\eqref{eq:magic-M2} gives
\begin{equation}
 \begin{aligned}
 e^{-M_2(\psi_{96})}&=\frac{296851+310\sqrt2}{107495424},\\
 e^{-M_2(\psi_{\rm A})}&=\frac{2117}{559872},\qquad
 e^{-M_2(\psi_{\rm B})}=\frac{2137}{559872},
 \end{aligned}
 \label{eq:magic-exact-moments}
\end{equation}
where $\psi_{\rm A}=T^{\rm WB}/6$ and
$\psi_{\rm B}=T^{{\rm WB},-}/6$. These moments were evaluated
in $\Q(\sqrt2)$ and independently checked by summing the $6^8$
Weyl expectation values. The comparison with the published golden
state uses its original coefficients and the same Weyl convention;
its quoted entropy is evaluated numerically.

\section{On the quantum seven-dice problem}
\label{app:seven-party}

We analyse here the $\AME(7,6)$ state of
Ref.~\cite{Shi2026} and its fractional Latin representation.
At each site write $v_j=3b_j+x_j$ with $b_j\in\mathbb F_2$ and
$x_j\in\mathbb F_3$. Let $C\subset\mathbb F_2^7$ be the span of

\begin{equation}
 H=\begin{pmatrix}
 1&0&1&1&1&0&0\\
 1&1&1&0&0&1&0\\
 0&1&1&1&0&0&1
 \end{pmatrix}.
\end{equation}

The binary states are
$|a_L\rangle=8^{-1/2}\sum_{b\in C+a\mathbf1}|b\rangle$,
$a=0,1$. Put $\omega=e^{2\pi i/3}$ and
\begin{align}
 |G_c\rangle&=3^{-7/2}\sum_{x\in\mathbb F_3^7}
        \omega^{Q(x)+c\cdot x}|x\rangle,\nonumber\\
 Q(x)&=\sum_{i<j}\Gamma_{ij}x_ix_j,\qquad
 \Gamma_{ij}=\begin{cases}
 1,&i-j\notin\{0,1,6\}\pmod7,\\
 0,&\text{otherwise}.
 \end{cases}
\end{align}
The state of Ref.~\cite{Shi2026} used here is
\begin{equation}
 |\Psi\rangle=\frac{|0_L\rangle|G_0\rangle+
                         |1_L\rangle|G_c\rangle}{\sqrt2},
 \qquad c=(1,2,1,0,0,0,0),
 \label{eq:seven-party-source}
\end{equation}
with the binary and ternary factors paired site by site into six-level
systems. There are $16\cdot3^7=34992$ nonzero amplitudes. An exact
check of every three-party Gram matrix counts phase differences
$0,1,2$: off the diagonal their counts agree, giving zero by
$1+\omega+\omega^2=0$; a diagonal row contains 162 terms, giving
$162/(16\cdot3^7)=6^{-3}$.

\subsection{Exact character form of the 48 blocks}
Use the ordering of Section~\ref{sec:seven-hadamard}, with
$a=4b_0+2b_1+b_2$, branch $\epsilon\in\{0,1\}$ in
Eq.~\eqref{eq:seven-party-source}, and $s=3\epsilon+x_4$.
Put $x=(x_0,x_1,x_2)$ and $u=(x_3,x_5,x_6)$ in $\F_3^3$.
The global row and column indices of block $(a,s)$ are
\begin{equation}
 r=27a+9x_0+3x_1+x_2,\qquad
 c=162a+27s+9x_3+3x_5+x_6.
\end{equation}
The binary support pairs each input triple with two output quadruples,
one per branch. Allowing the three values of $x_4$ gives the six
fully populated squares in each of the eight rectangles.

In the phase exponent of $H_{a,s}$, the terms involving only $x$ or
only $u$ are removed by row and column phases. The common dephased
matrix is
\begin{equation}
 \widehat H_{x,u}=\omega^{x^{\mathsf T}Mu},\qquad
 M=\begin{pmatrix}1&1&0\\1&1&1\\0&1&1\end{pmatrix},
 \qquad \det M=-1\ne0\quad\hbox{in }\F_3.
 \label{eq:seven-hadamard-kernel}
\end{equation}
For two rows $x,y$,
\begin{equation}
 \sum_{u\in\F_3^3}\omega^{(x-y)^{\mathsf T}Mu}
 =27\,\delta_{x,y}.
\end{equation}
Indeed, for $x\ne y$ the nonzero linear character takes each value
$0,1,2$ exactly nine times. Since $x\mapsto M^{\mathsf T}x$ is
bijective, $\widehat H/\sqrt{27}$ is $F_3^{\otimes3}$ after a row permutation.
Thus $H_{a,s}H_{a,s}^{\dagger}=27I_{27}$ for every $(a,s)$.
All 48 blocks have off-diagonal phase counts $(9,9,9)$ and the same
dephased kernel. Their $34992$ entries exhaust the support of the
full coefficient matrix.

\section{The quantum nine-dice problem and six Hadamard blocks}
\label{app:nine-phase}
We investigate the non-stabilizer
$\AME(9,6)$ state constructed by Kitamura~\cite{Kitamura2026}
at commit \texttt{500ee827dffb}
and look for its visualization.

We use the integer matrices $G,C,B,H$ from the version of
Ref.~\cite{Kitamura2026} specified above. Write $z_i=3x_i+y_i$, with
$x_i\in\{0,1\}$ and $y_i\in\mathbb F_3$. For a symmetric matrix $M$
with zero diagonal, let $q_M(u)=\sum_{i<j}M_{ij}u_i u_j$.
With $\eta=\exp(\pi i/3)$, the amplitudes are
\begin{equation}
\Psi(z):=\langle z_0,\ldots,z_8|\Psi_9\rangle
 =6^{-9/2}\eta^{f(z)},\qquad
 f(z)=3q_G(x)+2\bigl(q_C(y)+x^{\mathsf T}By+q_H(x)\bigr)\pmod6.
 \label{eq:ame96-phase}
\end{equation}
The quadratic form $q_G$ is binary; the expression in parentheses is
reduced modulo three. In particular, the binary entries of $x$ in the
mixed and $H$ terms are embedded as the integers $0,1$ in $\mathbb F_3$.
They must not be replaced by binary differences when evaluating
ternary phases.

Let $S=(0,1,2,3)$, $P=(4,5,6,8)$ and fix $z_7=t$. The relevant blocks are
\begin{equation}
 G_{SP}=\begin{pmatrix}
 1&1&1&1\\0&1&0&1\\0&1&1&0\\1&0&1&1
 \end{pmatrix},\quad
 C_{SP}=\begin{pmatrix}
 0&2&1&1\\1&2&0&0\\1&2&0&1\\0&1&0&1
 \end{pmatrix},\quad
 B_{SP}=\begin{pmatrix}
 2&1&0&2\\0&1&1&1\\0&1&0&1\\2&1&2&0
 \end{pmatrix}.
 \label{eq:ame96-crossblocks}
\end{equation}
Here $G_{SP}$ is invertible over $\mathbb F_2$ and $C_{SP}$ is
invertible over $\mathbb F_3$. The source also has
$B_{PS}=0$ and $H_{SP}=0$. Consequently, after collecting all terms
depending only on the row or only on the column into $r_t$ and $c_t$,
the unimodular block $E_t=6^2V_t$ reads
\begin{equation}
 (E_t)_{(x,y),(u,v)}
 =\eta^{r_t(x,y)+c_t(u,v)}
 (-1)^{x^{\mathsf T}G_{SP}u}
 \omega^{(y^{\mathsf T}C_{SP}+x^{\mathsf T}B_{SP})v},
 \qquad \omega=\eta^2.
 \label{eq:ame96-character-block}
\end{equation}
Both $x,u$ range over $\mathbb F_2^4$ and both $y,v$ over
$\mathbb F_3^4$. The map
\begin{equation}
 (x,y)\longmapsto
 \bigl(x^{\mathsf T}G_{SP},\;
 y^{\mathsf T}C_{SP}+x^{\mathsf T}B_{SP}\bigr)
 \label{eq:ame96-character-map}
\end{equation}
is bijective: the first component determines $x$, after which the
second determines $y$. Removing the row and column phases in
Eq.~\eqref{eq:ame96-character-block} therefore leaves precisely a
character table, up to row and column ordering. Orthogonality follows
from
\begin{equation}
 \sum_{u\in\mathbb F_2^4}(-1)^{r\cdot u}
 \sum_{v\in\mathbb F_3^4}\omega^{s\cdot v}
 =6^4\,\delta_{r,0}\delta_{s,0}.
\end{equation}
This proves all six Hadamard claims, including the mixed term in $B$.
An explicit Fourier labelling is obtained by writing
$r=x^{\mathsf T}G_{SP}$ and $s=y^{\mathsf T}C_{SP}+x^{\mathsf T}B_{SP}$
in their respective fields and assigning row labels $3r+2s\pmod6$
and column labels $3u+4v\pmod6$. Their ordinary scalar product modulo six
is exactly $3r\cdot u+2s\cdot v$.

The order-$36$ matrices $K^{a,b,c,d,t}$ in
Eq.~\eqref{eq:ame96-submatrix36} admit the same argument on two
coordinates. Write $(u,v)=3x+y$ and $(w,s)=3p+q$ componentwise,
with $x,p\in\mathbb F_2^2$ and $y,q\in\mathbb F_3^2$. The source
cross matrices for row parties $(2,3)$ and column parties $(5,8)$ are
\begin{equation}
 G_*=I_2,\qquad
 C_*=\begin{pmatrix}2&1\\1&1\end{pmatrix},\qquad
 B_*=\begin{pmatrix}1&1\\1&0\end{pmatrix},
 \label{eq:ame96-small-crossblocks}
\end{equation}
and the reverse mixed and $H$ cross terms vanish. Dephasing any
$K^{a,b,c,d,t}$ therefore gives the same kernel
\begin{equation}
 \widehat K_{(x,y),(p,q)}
 =(-1)^{x^{\mathsf T}p}
  \omega^{(y^{\mathsf T}C_*+x^{\mathsf T}B_*)q}.
 \label{eq:ame96-small-kernel}
\end{equation}
All dependence on the fixed labels $a,b,c,d,t$ is in row and column
phases. Since $\det C_*=1$ in $\mathbb F_3$, the character map
$(x,y)\mapsto(x^{\mathsf T},y^{\mathsf T}C_*+x^{\mathsf T}B_*)$
is bijective. Character orthogonality then gives
$\widehat K\widehat K^\dagger=36I_{36}$ and hence
$K^{a,b,c,d,t}(K^{a,b,c,d,t})^\dagger=36I_{36}$ for all fixed labels.
For each $t$, the $36$ row groups indexed by $(a,b)$ and the $36$
column groups indexed by $(c,d)$ partition $E_t$ into $6^4$ such
locations. Across all six full blocks this gives $6^5$ disjoint
order-$36$ Hadamard submatrices, covering every coefficient once:
$6^5\cdot36^2=6^9$.

\section{Golden \texorpdfstring{$AME(4,6)$}{AME(4,6)} state
and fractional Latin square}
\label{app:golden-officers}
The chess representation in Fig.~6 of Ref.~\cite{Zyczkowski2023}
already records both output labels, the coefficient moduli and their
relative phases. Figure~\ref{fig:golden-officers} displays the same golden
construction as a partition into probability areas. Each normalized
two-party officer occupies a unit square. A region of area $p$ represents
a coefficient of modulus $\sqrt p$ in that officer and $\sqrt p/6$ in
the four-party state. The difference is thus in the representation,
rather than in the underlying state: the areas now obey a common,
quantitative rule and fill each cell completely.
Let $|\psi_{\rm gold}\rangle$ denote the normalized golden state,
and define its officer coefficients by
\begin{equation}
 T^{\rm gold}_{ijkl}
 :=6\langle i,j,k,l|\psi_{\rm gold}\rangle.
\end{equation}
Then the table inputs are 
\begin{equation}
 |\Phi^{\rm gold}_{ij}\rangle
 =\sum_{k,l}T^{\rm gold}_{ijkl}|k,l\rangle,\qquad
 |\psi_{\rm gold}\rangle
 =\frac16\sum_{i,j}|i,j\rangle|\Phi^{\rm gold}_{ij}\rangle,
\end{equation}
and every $|\Phi^{\rm gold}_{ij}\rangle$ is normalized. Summing its
probabilities over either output label gives
\begin{equation}
 P^{\rm gold}_{ijk}=\sum_l|T^{\rm gold}_{ijkl}|^2,\qquad
 Q^{\rm gold}_{ijl}=\sum_k|T^{\rm gold}_{ijkl}|^2.
 \label{eq:golden-fractional}
\end{equation}

Both arrays satisfy the three line-sum conditions of
Eq.~\eqref{eq:tristochastic}. For example, normalization of an officer
gives $\sum_l Q^{\rm gold}_{ijl}=1$, while the diagonal parts of the
other two unitarity relations give
$\sum_j Q^{\rm gold}_{ijl}=\sum_i Q^{\rm gold}_{ijl}=1$.
The same argument applies to $P^{\rm gold}$.
In the figure, colour labels $l$, so its total area in a cell is
$Q^{\rm gold}_{ijl}$; each colour therefore occupies total area one
along every row and column. The exact coefficient data verify all
108 line sums for each array.

These are fractional Latin squares in the real-valued sense used here,
but not rational ones: their entries include $(5\pm\sqrt5)/20$.
The line sums concern total areas. They do not, by themselves, imply
that the displayed fragments can be overlaid at fixed positions to
fill a monochromatic square. Finally, unlike the real officer array,
the golden square requires phases beyond a sign. We retain them as
powers of $\omega=e^{i\pi/10}$, with the index and phase convention
used in the figure.

\begin{figure}[p]
 \centering
 \includegraphics[width=0.95\linewidth]{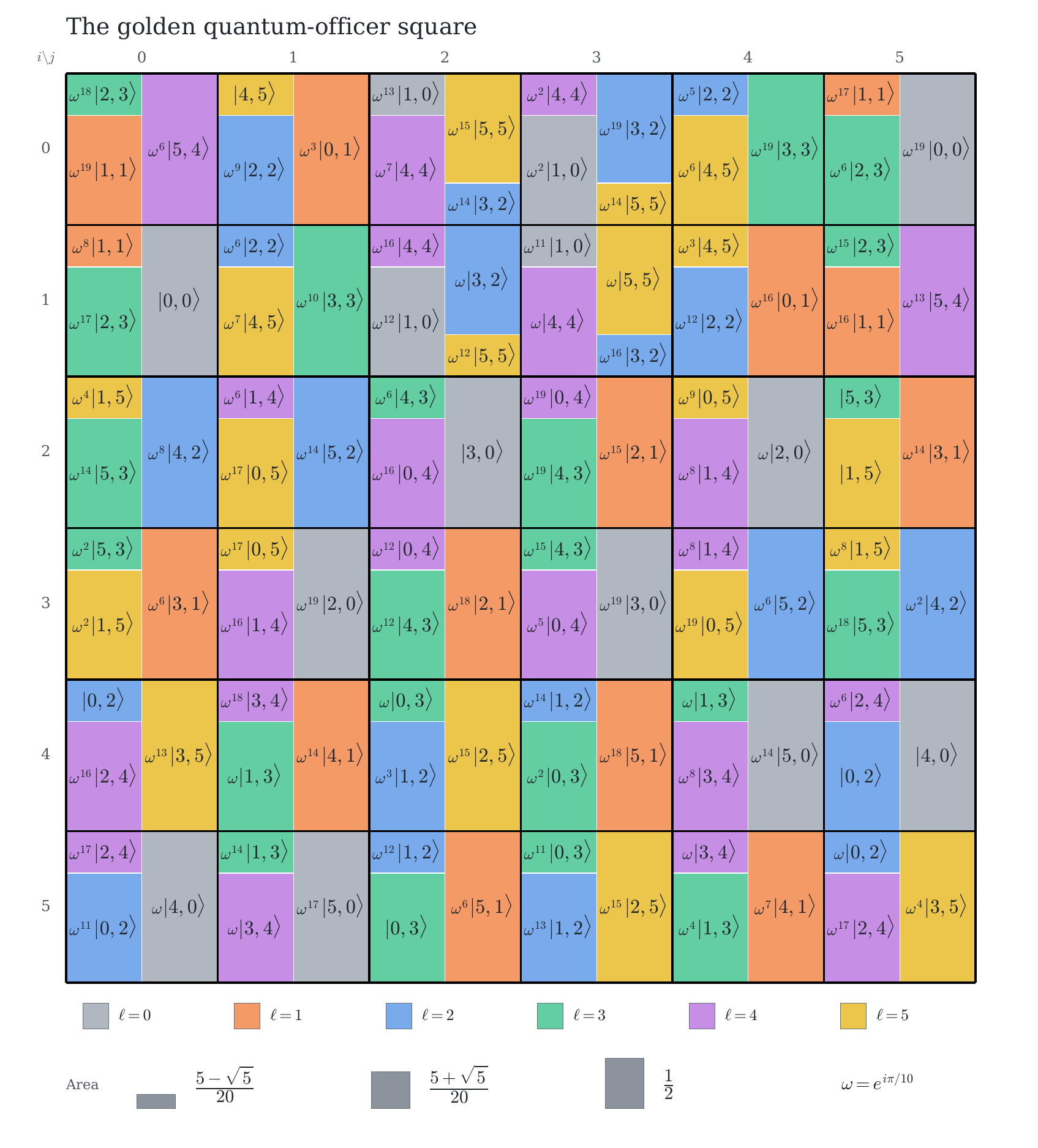}
 \caption{The 36 officer cells of the golden $\AME(4,6)$
 state~\cite{Rather2022,Zyczkowski2023}
 and the corresponding
fractional Latin square of order six.
  Black grid lines separate entangled
 officers; the finer internal divisions separate their product-basis
 terms $|k,l\rangle$. Colour labels the second output $l$, and the
 area of each rectangle is the squared modulus of its coefficient.
 The three possible areas are $(5-\sqrt5)/20$, $(5+\sqrt5)/20$ and $1/2$.
 The displayed powers of $\omega=e^{i\pi/10}$ give the relative phases;
 a missing phase factor means one. All 112 nonzero terms are shown.
 Summing the areas of each colour gives one in every row and column,
 as expressed by Eq.~\eqref{eq:golden-fractional}. The phase labels
 retain the coherent information absent from those area sums.}
 \label{fig:golden-officers}
\end{figure}

\end{document}